  \theoremstyle{definition}
  \newtheorem{definition}{Definition}
  \theoremstyle{plain}
  \newtheorem{theorem}{Theorem}
  \newtheorem{corollary}{Corollary}
  \newtheorem{lemma}{Lemma}
  \theoremstyle{remark}
\newcommand{\ignore}[1]{}
\newcommand{\II}{\mathbb{I}} % Added by Theertha on April 16th 2013.
\newcommand{\EE}{\mathbb{E}}
\newcommand{\RR}{\mathbb{R}}
\newcommand{\expectation}[1]{\EE\left[#1\right]}
\newcommand{\variance}[1]{\textrm{Var}\left(#1\right)}
\def \cR     {{\cal R}}
\def \cX     {{\cal X}}
\def \cY     {{\cal Y}}
\def \ceil#1{{\lceil{#1}\rceil}}
\newcommand{\absv}[1]{\left|#1\right|}
\def \Paren#1{{\left({#1}\right)}}
\newcommand{\ed}{\stackrel{\text{def}}{=}}
\newcommand{\probof}[1]{\Pr\Paren{#1}}
\def\ignore#1{}
\newcommand{\bi}{\begin{itemize}}
\newcommand{\ei}{\end{itemize}}
\def\orpro{\mathop{\mathchoice
   {\vee\kern-.49em\raise.7ex\hbox{$\cdot$}\kern.4em}
   {\vee\kern-.45em\raise.63ex\hbox{$\cdot$}\kern.2em}
   {\vee\kern-.4em\raise.3ex\hbox{$\cdot$}\kern.1em}
   {\vee\kern-.35em\raise2.2ex\hbox{$\cdot$}\kern.1em}}\limits}
\def\andpro{\mathop{\mathchoice
 {\wedge\kern-.46em\lower.69ex\hbox{$\cdot$}\kern.3em}
 {\wedge\kern-.46em\lower.58ex\hbox{$\cdot$}\kern.25em}
 {\wedge\kern-.38em\lower.5ex\hbox{$\cdot$}\kern.1em}
 {\wedge\kern-.3em\lower.5ex\hbox{$\cdot$}\kern.1em}}\limits}
\def\simge{\mathrel{%
   \rlap{\raise 0.511ex \hbox{$>$}}{\lower 0.511ex \hbox{$\sim$}}}}
\def\simle{\mathrel{
   \rlap{\raise 0.511ex \hbox{$<$}}{\lower 0.511ex \hbox{$\sim$}}}}
\newcommand{\ab}{k}
\newcommand{\ns}{n}
\newcommand{\p}{p}
\newcommand{\q}{q}
\newcommand{\Xon}{X^\ns}
\newcommand{\Yon}{Y^\ns}
\newcommand{\eps}{\varepsilon}
\newcommand{\norm}[2]{\lVert#1\rVert_{#2}}
\newcommand{\Mlt}[1]{N_{#1}}
\newcommand{\checked}[1]{\textcolor{black}{#1}}
\def\withcolors{0}
\def\withnotes{0}
\newcommand{\newer}[1]{{\color{RoyalPurple} {#1}}} % Newer stuff
\newcommand{\newest}[1]{{\color{orange} {#1}}} % Even even newer stuff
\newcommand{\newerest}[1]{\textcolor{purple}{#1}}
\newcommand{\newer}[1]{{{#1}}}
\newcommand{\newest}[1]{{#1}}
\newcommand{\newerest}[1]{{#1}}
\title{Communication Complexity in Locally Private Distribution Estimation and Heavy Hitters}
\author {
Jayadev Acharya\thanks{Supported by NSF-CCF-CRII-1657471, NSF-CCF-1846300 (CAREER). To appear at International Conference on Machine Learning (ICML), 2019.} \\
Cornell University\\
\tt{acharya@cornell.edu}
\and
Ziteng Sun$^{*}$\\
Cornell University\\
\tt{zs335@cornell.edu}
}
\begin{document}
\maketitle
\begin{abstract}
\checked{We consider the problems of distribution estimation and heavy hitter (frequency) estimation under privacy and communication constraints. While these  constraints have been studied separately, optimal schemes for one are sub-optimal for the other. We propose a sample-optimal $\eps$-locally differentially private (LDP) scheme for distribution estimation, where each user communicates only one bit, and requires \textit{no public randomness}. We show that Hadamard Response, a recently proposed scheme for $\eps$-LDP distribution estimation is also utility-optimal for heavy hitter estimation. Finally, we show that unlike distribution estimation, without public randomness where only one bit suffices, any heavy hitter estimation algorithm that communicates $o(\min \{\log n, \log k\})$ bits from each user cannot be optimal.}   
\end{abstract}

\section{Introduction}
Inferring efficiently from data forms the core of modern data science. In many applications, being able to perform inference from available data is perhaps the most critical step. However, in several cases, these data samples contain sensitive information about the various users, who would like to protect their information from being leaked. For example, medical data may contain sensitive information about individuals that can be inferred without proper design of the collection scheme, a key issue highly publicized following the publications of~\cite{Sweeney02, HomerSRDTMPSNC08}.

Private data release and computation has been studied in various domains, such as statistics, machine learning, database theory, algorithm design, and cryptography (See e.g.,~\cite{Warner65, Dalenius77, DinurN03, WassermanZ10, WainwrightJD12, ChaudhuriMS11}). \emph{Differential Privacy (DP)}~\cite{DworkMNS06} has emerged as one of the most popular notions of privacy (see~\cite{DworkMNS06, WassermanZ10, BlumLR13, McSherryT07, KairouzOV17}, references therein, and  the recent book~\cite{DworkR14}). DP has been adopted by companies including Google, Apple and Microsoft~\cite{AppleDP17, ErlingssonPK14, DingKY17}. 

A popular privacy definition is \emph{local differential privacy (LDP)}, which was perhaps first proposed in~\cite{Warner65}, and more recently in~\cite{BeimelNO08, KasiviswanathanLNRS11}, where users do not trust the data collector, and privatize their data before releasing it. LDP is a stringent privacy constraint that requires noise to be added at each sample, and thus providing privacy to all users, even if the data collector is compromised. Often, LDP guarantees come at the cost of increased data requirement for various canonical inference tasks. 

{Communication constraints are increasingly becoming a bottleneck to more and more distributed inference problems. For example, in mobile devices, and small sensors with a limited power/limited uplink capacity, the communication budget can overshadow the local computations performed at each of them. This has led to a growing interest in understanding various inference tasks under limited communication, where the users do not have enough communication to even transmit their data~\cite{BravermanGMNW16, DaganS18}, and recent works have established optimal bounds, and algorithms for fundamental problems such as distribution estimation and hypothesis testing~\cite{DiakonikolasGLNOS17, HanOW18, AcharyaCT18, AcharyaCT18b}. Like privacy, these works show that communication constraints also increase the data requirements for vatious tasks.} 

%In this work, we will consider two related problems, discrete distribution estimation, and frequency/heavy hitter estimation. 

\subsection{Notations and Set-up}
We consider the following distributed setting. The underlying domain of interest is a known discrete set $\cX$ of size $\ab$. Without loss of generality, let $\cX = [k] :=\{1, \ldots, \ab\}$. \checked{There are $\ns$ users, and where user $i$ observes $X_i\in[\ab]$, and then sends a message $Y_i\in\cY$, the output domain, to the central server (data collector, referee) $\cR$, who upon observing the messages $Y^\ns:=Y_1, \ldots, Y_{\ns}$ wants to solve some pre-specified inference task. At user $i$, the process of generating message $Y_i$ from input $X_i$ can be characterized via a channel (a randomized mapping) $W_i:[\ab]\to\cY$, where $W_i(x, y)$ is the probability that $Y_i=y$ given that $X_i=x$.}

We now instantiate LDP, and communication constraints as special cases of this model. 

\smallskip
\noindent\textbf{1. Local Differential Privacy.}
A scheme is $\eps$-Locally Differentially Private (LDP), if $\forall x, x'\in\cX$, and $\forall y\in \cY$, 
\begin{align}
	\frac{W_i(x, y)}{W_i(x', y)} \le e^{\eps}, \ \ \ \forall i=1, \ldots, \ns.\label{eqn:dp-definition}
\end{align} 
Throughout the paper, we consider the high privacy regime, where $\eps=O(1)$.
 
\smallskip
\noindent\textbf{2. Communication Constraints.} Given a communication budget $\ell>0$, the scheme is \emph{$\ell$-bit communication limited} if $\cY=\{0,1\}^\ell$, and the output messages are at most $\ell$ bits long. 

We consider two inference tasks that $\cR$ wants to solve.
 
\smallskip
\noindent\textbf{1. Discrete Distribution Estimation.} Let $\Delta_k$ denote the collection of all distributions over the input domain $[\ab]$.  $\p\in\Delta_\ab$ is an unknown distribution. User $i$ observes $X_i$, which is an independent draw from $\p$. The referee's goal, upon observing the messages $Y_1, \ldots, Y_\ns$ is to estimate $\p$. The goal is to design the $W^\ns:=W_1,\ldots, W_\ns$'s (satisfying the appropriate constraints), and a $\hat{\p}:\cY^\ns\to\Delta_k$ to minimize the expected minimax $\ell_1$ risk:
\begin{align}
r(\ell_1, \Delta_k):=\min_{\hat\p}\min_{W^\ns}\max_{\p \in \Delta_\ab}\expectation{\norm{\p - \hat\p}{1}}.
\end{align}

When $W_1, \ldots, W_\ns$ satisfy~\eqref{eqn:dp-definition}, we denote $r(\ell_1, \Delta_k)$ by $r_{\texttt{DP}}(\ell_1, \Delta_k, \eps)$, and when $W_1, \ldots, W_\ns$ are communication limited by at most $\ell$ bits, it is denoted by $r_{\texttt{CL}}(\ell_1, \Delta_k, \ell)$
% and such that for any $\p\in\Delta_k$, over the randomness in $\Xon$, and $\Yon$, with probability at least $9/10$, $\dli{\p}{\hat\p}\le \alpha$. We denote

\medskip
\noindent\textbf{2. Frequency/Heavy Hitter Estimation.} Unlike distribution estimation, in this case there is no distributional assumption on the $X_i$'s (i.e., $X^\ns:=X_1,\ldots, X_\ns$ can be any element in $[\ab]^\ns$), and the goal is to estimate the empirical distribution of the symbols. In particular, for $x\in[\ab]$, let $\Mlt{x}$ be the number of appearances of $x$ in $\Xon$. The objective is to estimate the $\Mlt{x}/\ns$'s from the messages under LDP constraints. \newest{This problem has been studied under the $\ell_\infty$ norm objective, namely the goal is to design , and $\hat{p}$ to minimize
\begin{align}
\label{eqn:ell-inf}
\!\!\!r_{\texttt{DP}}(\ell_\infty,k, \eps):=\min_{\hat\p}\min_{W^\ns}\max_{\Xon}\EE\left[\max_{x}{\absv{\hat p(x)-\frac{\Mlt{x}}{n}}}\right],\!
\end{align}
where the expectation is over the randomness over messages induced by the channels, and the estimator $\hat p$.}

\checked{We will consider simultaneous message passing (SMP) communication protocols, where each user sends their message to the server at the same time. Within these, we study both protocols that have access to public randomness, and those that do not.} 

\medskip
\textbf{1. Private-coin Schemes:} \newest{In private-coin schemes, the players choose their channels $W_i$'s independently, without coordination. Formally, $U_1, \ldots, U_n$ are $\ns$ mutually independent random variables distributed across users $1,\ldots, \ns$ respectively. User $i$ chooses $W_i$ as a function of $U_i$. Referee $\cR$ knows the distribution of $U_i$, but not the instantiation of $U_i$ used to choose $W_i$.}

We first observe that under $\eps$-LDP or communication-constraints, private-coin schemes can be assumed to be \emph{deterministic}, namely the channels $W_1, \ldots, W_\ns$ are all fixed a priori. 
\begin{lemma}
\label{lem:deterministic} 
Private-coin schemes, and deterministic schemes are equivalent under both LDP and communication-limited constraints.
%Private-coin LDP schemes, and communication-limited schemes are equivalent to deterministic schemes. 
\end{lemma}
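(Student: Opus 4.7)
The plan is to prove the two directions of equivalence separately. One direction is immediate: every deterministic scheme is a private-coin scheme in which each $U_i$ is a degenerate (constant) random variable, so deterministic $\subseteq$ private-coin. The substantive direction is to show that for any private-coin scheme one can exhibit a deterministic scheme that (i) satisfies the same constraint ($\eps$-LDP or $\ell$-bit communication), and (ii) induces the same joint distribution of $(X^{\ns},Y^{\ns})$ at the referee, so that the same estimator achieves the same risk.

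The construction is the natural one: given a private-coin scheme in which user $i$ selects channel $W_i$ as a (measurable) function of the private random variable $U_i$, I would define the averaged channel
\[
\bar W_i(x,y) \;:=\; \EE_{U_i}\bigl[W_i(U_i)(x,y)\bigr], \qquad x\in[\ab],\ y\in\cY.
\]
Because $U_1,\ldots,U_{\ns}$ are mutually independent and each user $i$ generates $Y_i$ from $X_i$ using only $U_i$, the conditional law of $Y^{\ns}$ given $X^{\ns}$ factorizes as $\prod_i \bar W_i(X_i,\cdot)$, exactly as it would under the deterministic scheme that uses $\bar W_i$ at user $i$. So the referee's minimax risk is unchanged.

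It then remains to check that $\bar W_i$ inherits the relevant constraint. For communication, this is immediate: every realization $W_i(U_i)$ takes values in $\{0,1\}^\ell$, hence so does the mixture $\bar W_i$. For $\eps$-LDP, I would use the linearity of expectation together with the pointwise privacy bound on each realization:
\[
\bar W_i(x,y) \;=\; \EE_{U_i}[W_i(U_i)(x,y)] \;\le\; e^{\eps}\,\EE_{U_i}[W_i(U_i)(x',y)] \;=\; e^{\eps}\,\bar W_i(x',y),
\]
for all $x,x'\in[\ab]$ and $y\in\cY$, which is the $\eps$-LDP condition for $\bar W_i$. Combining these two observations with the preservation of the joint law of $(X^{\ns},Y^{\ns})$ yields the claimed equivalence.

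There is no real obstacle here; the only subtle point worth stating carefully is why the referee cannot benefit from the private randomness. This boils down to the hypothesis that the referee knows only the \emph{distribution} of each $U_i$ and not its realization, so its posterior over $X^{\ns}$ depends on $Y^{\ns}$ only through the averaged kernels $\bar W_i$; any estimator $\hat p(Y^{\ns})$ therefore has identical risk under the original scheme and under the deterministic scheme using $\bar W_i$.
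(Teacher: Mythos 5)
Your proof is correct and takes essentially the same approach as the paper: replace each private-coin channel by its expectation $\bar W_i = \EE_{U_i}[W_i]$ and observe that the constraint set is convex (you verify convexity explicitly for both the LDP and communication constraints, and check that the induced joint law of $(X^\ns,Y^\ns)$ is unchanged). The paper's proof is just a terser version of the same convexity argument.
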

\begin{proof}
Note that the set of channels satisfying~\eqref{eqn:dp-definition}, and those with output at most $\ell$ bits are both convex. 
For any user $i$, let $\EE_{U_i}[W_i]$ denote the expected channel, over randomness in $U_i$. By convexity, $\EE_{U_i}[W_i]$ also satisfies the constraints, and can therefore be chosen as the deterministic channel whose input-output behavior is the same as choosing $U_i$, and then $W_i$.  	
\end{proof}
  %In this case, each user can user a different encoding function $f_i$ and each $f_i$ can be described by a channel matrix $W_i^{(k\times m)}$ where $W_i(r,s) = \probof{Y_i = s | X_i = r}$.

\medskip
\textbf{2. Public-coin Schemes:} {In public-coin protocols, the users and referee all have access to a common random variable $U$. The users select their channels as a function of $U$, namely $W_i = f_i(U)$. $\cR$ solves the inference task using messages $Y^\ns$, and $U$.}

\medskip
\textbf{3. Symmetric Schemes:} {These are schemes, where each user uses the same process to select their channel $W_i$'s. From Lemma~\ref{lem:deterministic}, we can conclude for any private-coin symmetric scheme for LDP and communication constraints, there is a deterministic $W$ such that $W_1=\ldots=W_\ns=W$.}%In particular, for private-coin symmetric schemes,  $U_1, \ldots, U_\ns$ are the i.i.d., and the function mapping the $U_i$'s to $W_i$'s are the same for all users.} 

\newer{\textbf{Motivation.} Beyond the theoretical interest in understanding the power of different kinds of schemes and randomness, we note some practical implications of this work. In particular, private-coin schemes are easier to implement than public-coin schemes, since they do not require additional communication from the server specifying the common random variable $U$. Even within private-coin schemes, symmetric schemes are easier to implement, since all users perform the same operation. We restrict to SMP schemes and do not consider the more general interactive schemes. These operate in rounds, and in each round some users send their messages. The players can choose their channels upon observing these previous messages~\cite{QinYYKXR16, DuchiJW13, KasiviswanathanLNRS11, SmithTU17}.~\cite{SmithTU17} show that interactive schemes can be much more powerful than non-interactive schemes.}

\checked{We consider both LDP distribution estimation and heavy hitter estimation and study the trade-offs between utility and communication for these problems.}

\subsection{Prior Work}
Distribution estimation is a classical task in the centralized setting, where $\cR$ observes the true samples $X^\ns$, and it is known that~\cite{DevroyeL01}
\begin{align}
\label{eqn:optimal-error}
r(\ell_1, \Delta_k) = \Theta\Paren{\sqrt{\frac{k}{n}}}.
\end{align}
%which shows a linear dependence of $n$ on $k$. 

\checked{Distribution estimation under $\eps$-LDP is also well studied in the past few years~\cite{DuchiWJ13, ErlingssonPK14, WangHWNXYLQ16, KairouzBR16, YeB18, AcharyaSZ18a, Bassily18}.~\cite{DuchiWJ13, KairouzBR16, YeB18, AcharyaSZ18a} have given private-coin, symmetric schemes which for $\eps=O(1)$ (our regime of interest) achieve}
\begin{align}
\label{eqn:optimal-error-ldp}
r_{\texttt{DP}}(\ell_1, \Delta_k, \eps) = \Theta\Paren{\sqrt{\frac{k^2}{n\eps^2}}}.
\end{align}
\checked{This $\ell_1$ risk is optimal over all protocols, even while allowing public-coins~\cite{DuchiJW13, YeB18, AcharyaCT18b}. Note that compared to the centralized setting, the risk is a factor of $\Theta(\sqrt{k}/\eps)$ higher which shows the significant drop in the utility under LDP.}

Distribution estimation under LDP constraints has only been formally studied using private-coin schemes. In terms of communication requirements (number of bits to describe $Y_i$'s)~\cite{DuchiWJ13, YeB18} require $\Omega(k)$ bits per user, and Hadamard Response (HR) of~\cite{AcharyaSZ18a} requires $\log k+2$ bits.

%\textcolor{red}{This comment above should probably go to the end of this section!! What do you think??} 
 
 Distribution estimation has also been studied recently under very low communication budget~\cite{DiakonikolasGLNOS17, HanOW18, AcharyaCT18, AcharyaCT18b}, where each user sends only $\ell<\log k$ bits to $\cR$. In particular, now it is established that by only using private-coin communication schemes,
 \begin{align}
\label{eqn:optimal-error-cl}
r_{\texttt{CL}}(\ell_1, \Delta_k, \ell) = \Theta\Paren{\sqrt{\frac{k^2}{n \min\{2^\ell, k\}}}}.
\end{align}	
Further, these results are tight even with public coins. 
Note that for $\ell=1$ (each user sends one bit),
 \begin{align}
\label{eqn:optimal-error-cl-onebit}
r_{\texttt{CL}}(\ell_1, \Delta_k, 1) = \Theta\Paren{\sqrt{\frac{k^2}{n}}}.
\end{align}	
\eqref{eqn:optimal-error-ldp} and~\eqref{eqn:optimal-error-cl-onebit} show the parallel between LDP, and communication constraints for $\ell_1$ risk of discrete distribution estimation. In particular, note that for both constraints there is an additional factor of $\Theta(\sqrt{k})$ blow-up in the $\ell_1$ risk over the centralized setting.

\checked{The problem of heavy hitter estimation under $\eps$-LDP has also received a lot of attention~\cite{HsuKR12, ErlingssonPK14, BassilyS15, QinYYKXR16, BassilyNST17, DingKY17, BunJU18}. The optimal $\ell_\infty$ (see~\eqref{eqn:ell-inf}) risk was established in~\cite{BassilyS15} as}
\begin{align}
r_{\texttt{DP}}(\ell_\infty,k, \eps):=\Theta\Paren{\frac1\eps\sqrt{\frac{\log k}{\ns}}}.
\end{align}
The state of the art research focus on improved computation time, and reduced communication from the users at the expense of public randomness from the referee~\cite{BassilyS15, BassilyNST17, BunJU18}.
\checked{In particular, these works propose algorithms that require only $O(1)$ communication from each user, and more interestingly, are  able to achieve a running time (at the referee) that is almost linear in $n$, and only logarithmic in $\ab$. We also note that the algorithm in~\cite{BassilyNST17} can in fact be simulated at the  users who can then transmit it, causing an increased communication cost from the users.} %We remark that the primary concern of these papers was to design both computation and communication efficient schemes.

The complementary problem of hypothesis testing has also been studied in the locally differentially private setting in~\cite{Sheffet17, GaboardiR18, AcharyaCFT19}, where the goal is to test whether the data samples are generated from one class of distributions or another. 

\subsection{Our Results and Techniques}

%\begin{table*} 
%	\centering
%	\begin{tabular} {| c | c | c | c|}\hline 
%		\backslashbox{Communication}{Randomness} & Symmetric & Private Randomness &  Public Randomness \\ \hline
%		$O(1)$ bits & $\Omega(1)$ (Theorem~\ref{thm:dist_sym})& $\Theta(\sqrt{\frac{k^2}{n\eps^2}})$ (Theorem~\ref{thm:1bit_performance})& $\Theta(\sqrt{\frac{k^2}{n\eps^2}})$ \\ \hline
%		$O(\log k)$ bits & $\Theta(\sqrt{\frac{k^2}{n\eps^2}})$~\cite{AcharyaSZ18a} & $\Theta(\sqrt{\frac{k^2}{n\eps^2}})$ &  $\Theta(\sqrt{\frac{k^2}{n\eps^2}})$\\ \hline
%	\end{tabular}
%	\caption{$\ell_1$ risk for distribution learning under different communication budget and available randomness.}
%	\label{tab:dist_learning}
%\end{table*}
\begin{table*}
	\centering
	\begin{tabular} {| c | c | c |}\hline 
		\backslashbox{Randomness}{Communication}& $O(1)$ bits& $O\left(\log k\right)$ bits  \\ \hline
		Symmetric, Private Randomness& $\Omega(1)$ (Theorem~\ref{thm:dist_sym}) & $\Theta\Paren{\sqrt{\frac{k^2}{n\eps^2}}}$~\cite{AcharyaSZ18a} \\ \hline
		Private Randomness & $\Theta(\sqrt{\frac{k^2}{n\eps^2}})$ (Corollary~\ref{thm:onebit_l1})& $\Theta\Paren{\sqrt{\frac{k^2}{n\eps^2}}}$  \\ \hline
		Public Randomness &$\Theta\Paren{\sqrt{\frac{k^2}{n\eps^2}}}$ & $\Theta\Paren{\sqrt{\frac{k^2}{n\eps^2}}}$\\ \hline
	\end{tabular}
	\caption{$\ell_1$ risk for distribution estimation under different communication budget and randomness.}
	\label{tab:dist_learning}
\end{table*}

%\begin{table*} 
%	\centering
%	\begin{tabular} {| c | c | c | c|}\hline 
%		\backslashbox{Communication}{Randomness} & Symmetric & Private Randomness &  Public Randomness \\ \hline
%		$O(1)$ bits & $\Omega(1)$ & $\Omega(1)$ (Theorem~\ref{thm:heavy_accuracy}) & $\Theta(\sqrt{\frac{\log k}{n\eps^2}})$~\cite{BassilyS15} \\ \hline
%		$O(\log k)$ bits & $\Theta(\sqrt{\frac{\log k}{n\eps^2}})$ (Theorem~\ref{thm:hadamard_heavy}) & $\Theta(\sqrt{\frac{\log k}{n\eps^2}})$ & $\Theta(\sqrt{\frac{\log k}{n\eps^2}})$ \\ \hline
%	\end{tabular}
%	\caption{$\ell_\infty$ risk for frequency estimation under different communication budget and available randomness.}
%	\label{tab:dist_learning}
%\end{table*}

\begin{table*}
	\centering
	\begin{tabular} {| c | c | c |}\hline 
		\backslashbox{Randomness}{Communication}& $O(1)$ bits& $O(\log k)$ bits  \\ \hline
		Symmetric, Private Randomness& $\Omega(1)$ & $\Theta\Paren{\sqrt{\frac{\log k}{n\eps^2}}}$ (Theorem~\ref{thm:hadamard_heavy}) \\ \hline
		Private Randomness & $\Omega(1)$ (Theorem~\ref{thm:heavy_accuracy}) & $\Theta\Paren{\sqrt{\frac{\log k}{n\eps^2}}}$ \\ \hline
		Public Randomness & $\Theta\Paren{\sqrt{\frac{\log k}{n\eps^2}}}$~\cite{BassilyS15} &  $\Theta\Paren{\sqrt{\frac{\log k}{n\eps^2}}}$ \\ \hline
	\end{tabular}
	\caption{$\ell_\infty$ risk for frequency estimation under different communication budget and schemes.}
	\label{tab:heavy-hitters}
\end{table*}

%\subsection{One bit distribution learning without public randomness}\label{sec:one-bit}
%\subsection{Communication lower bound on symmetric schemes}\label{sec:symmetric-lb}
%\subsection{Optimal Heavy Hitters with Hadamard Response}\label{sec:hadamard}
%\subsection{One Bit Heavy Hitter is Impossible}\label{sec:lb-one-bit}
%\subsection{Heavy Hitters with Fewer than log k bits is Impossible}

We start by stating informally a remarkable result of~\cite{BassilyS15} that can help provide better context for some of the prior work and our new results. 
\begin{lemma}[\cite{BassilyS15}]
\label{lem:bassily-smith}
Any private-coin scheme with arbitrary communication requirements can be converted into a public-coin scheme that requires only one bit of communication from each user with almost no loss in performance.
\end{lemma}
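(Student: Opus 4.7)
The plan is to use the public coin to compress each user's multi-bit message down to a single bit in a way that preserves (in expectation and with bounded variance) the sufficient statistic used by the referee's decoder. By Lemma~\ref{lem:deterministic} we may assume the given private-coin protocol is symmetric and deterministic: each user $i$ applies the same $\eps$-LDP channel $W:[\ab]\to\cY$ to $X_i$ to obtain $Y_i\sim W(X_i,\cdot)$. For distribution and frequency estimation, the decoder can without loss of generality be written in the linear form $\tfrac{1}{\ns}\sum_i \phi(Y_i)$ for some contribution map $\phi:\cY\to\RR^d$, so it suffices to recover $\sum_i \phi(Y_i)$ from one bit per user.

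The compression I propose uses importance sampling. For each user $i$, the public coin independently draws a reference output $z_i$ from a distribution $Q$ on $\cY$, chosen as $Q(\cdot)=W(x_0,\cdot)$ for an arbitrary fixed reference input $x_0$, so that by LDP the importance weights satisfy $W(x,z)/Q(z)\in[e^{-\eps},e^\eps]$ for every $x,z$. User $i$ privately samples $Y_i\sim W(X_i,\cdot)$ and transmits only the single bit $b_i\eqdef\mathbf{1}\{Y_i=z_i\}$. The referee forms the unbiased importance-sampling estimate
\[
\hat\phi_i \eqdef \frac{b_i}{Q(z_i)}\,\phi(z_i),
\]
which satisfies $\EE[\hat\phi_i\mid X_i]=\sum_{z\in\cY} W(X_i,z)\phi(z)=\EE[\phi(Y_i)\mid X_i]$, and substitutes $\sum_i\hat\phi_i$ for $\sum_i\phi(Y_i)$ in the original decoder.

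Two things remain to verify. First, LDP preservation: the ratio at $b_i=1$ is $W(x,z)/W(x',z)\le e^\eps$ by assumption, while at $b_i=0$ the ratio $(1-W(x,z))/(1-W(x',z))$ is not automatically bounded. I would handle this by a preliminary alphabet-padding step (using the public coin, replace each output $y$ by the pair $(y,U)$ with $U$ uniform on a sufficiently large set) to enforce $\max_{x,z}W(x,z)\le 1/2$; an elementary calculation then gives $(1-a)/(1-b)\le 1+\eps\le e^\eps$ whenever $a/b\in[e^{-\eps},e^\eps]$ and $b\le 1/2$. The main obstacle is the second step, controlling the variance of $\hat\phi_i$: a naive importance-sampling bound could blow up the per-user second moment by a factor of $|\cY|$, and one must exploit the LDP-bounded importance ratios together with the structure of $\phi$ (or, if necessary, replace the indicator compression by a Hadamard-style random sketch of $\phi(Y_i)$) to show that the per-user variance inflates by only an $O(1)$ factor, which then yields the constant-factor loss in $\ell_1$ or $\ell_\infty$ risk claimed in the Bassily--Smith reduction.
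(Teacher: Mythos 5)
The paper does not prove this lemma; it is stated as an informal citation of Bassily and Smith~\cite{BassilyS15} and is used purely as a black box, so there is no in-paper argument to compare against. Judging the proposal on its own terms: the template --- public coin draws a candidate $z_i\sim Q$ with $Q(\cdot)=W(x_0,\cdot)$, the user sends one bit, the referee reweights by $1/Q(z_i)$ --- is the right one, and your treatment of the $b_i=0$ privacy ratio (force the bias below $1/2$, then use $(1-a)/(1-b)\le 2-e^{-\eps}\le e^\eps$ when $a/b\in[e^{-\eps},e^\eps]$ and $b\le 1/2$) is also right. The problem is the bit itself. Taking $b_i=\mathbf{1}\{Y_i=z_i\}$ with $Y_i$ sampled privately from $W(X_i,\cdot)$ gives $\Pr[b_i=1\mid X_i]=\sum_z Q(z)W(X_i,z)$, which is $\Theta(1/|\cY|)$ whenever $W$ spreads mass across the output alphabet (as RAPPOR and Hadamard Response do). The user's bit then equals zero, and carries no information, except with probability $\Theta(1/|\cY|)$; correspondingly $\EE[\hat\phi_i^2\mid X_i]=\sum_z W(X_i,z)\phi(z)^2/Q(z)$ exceeds the uncompressed $\EE[\phi(Y_i)^2\mid X_i]=\sum_z W(X_i,z)\phi(z)^2$ by a factor of order $|\cY|$. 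This is not a loose estimate to be tightened by exploiting ``structure of $\phi$,'' and a public Hadamard sketch of $\phi(Y_i)$ loses the same factor, since you are still trying to describe an already-realized $|\cY|$-valued random variable in one bit. It is a genuine failure of the indicator-bit encoding.

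The missing idea is that the user should not sample $Y_i$ at all. Since $z_i$ is public and $X_i$ is local, the user can evaluate the likelihood $W_i(X_i,z_i)$ directly and transmit $b_i\sim\mathrm{Bern}\big(c\,W_i(X_i,z_i)/Q(z_i)\big)$ with, say, $c=e^{-\eps}/2$, so the bias already lies in $[0,1/2]$ and no alphabet padding is needed. Unbiasedness of $\hat\phi_i=(b_i/c)\,\phi(z_i)$ is the same one-line computation you did; but now $\Pr[b_i=1\mid X_i]=c$ is a constant, $\EE[\hat\phi_i^2\mid X_i]=c^{-1}\sum_z W_i(X_i,z)\phi(z)^2=2e^\eps\cdot\EE[\phi(Y_i)^2\mid X_i]$ is only an $O(1)$ inflation, and your $(1-a)/(1-b)$ bound closes the $\eps$-LDP check. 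Two smaller points: Lemma~\ref{lem:deterministic} grants determinism but not symmetry, so you should carry per-user channels $W_i$ (the construction applies per user regardless); and restricting to linear decoders narrows ``any private-coin scheme,'' though it is adequate for the two estimation tasks the paper invokes the lemma for.
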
 
This result implies that with public randomness there exist schemes for both distribution estimation and heavy hitter estimation with one bit communication from each user. In this context, the main contribution of~\cite{BunJU18, BassilyNST17} is to design schemes that are computationally efficient, as discussed earlier.  

In this paper, in some sense, we study the converse question.
\begin{center}
\noindent\fbox{%
	\parbox{5.5in}{%
		\textit{Is public randomness necessary to reduce communication from users under LDP?
}
	}
}
\end{center}
One of our main results is that for the two related problems of distribution estimation and heavy hitter estimation, the answer to this question is different. For LDP distribution estimation, we design a scheme with optimal $\ell_1$ risk that needs no public randomness, and transmits only one bit of communication per player. We also show any private-coin scheme with optimal risk for heavy hitter estimation must communicate $\omega(1)$ bits per user. We now describe our results and techniques. 

\medskip
\noindent\textbf{Distribution Estimation.} To build toward constant communication private-coin distribution estimation schemes, recall that the known optimal distribution estimation schemes~\cite{DuchiJW13, KairouzBR16, YeB18, AcharyaSZ18a} are all symmetric, and require $\Omega(\log k)$ communication bits per message. %Further, the 1-bit communication scheme that achieves~\eqref{eqn:optimal-error-cl-onebit} does not preserve any privacy ($\eps=\infty$), and is also asymmetric. 
%Our paper is motivated by the following question:

Our first result in Theorem~\ref{thm:dist_sym} (Section~\ref{sec:lower_symmetric}) states that any private-coin scheme that is both symmetric, and where each player transmits fewer than $\log k$ bits has minimax $\ell_1$ risk equal to 1. This shows that HR  has optimal communication (up to $\pm 2$ bits) among all symmetric schemes. More importantly, this implies that any private-coin scheme that aims to communicate fewer bits must be asymmetric, namely the users cannot all have the same $W_i$'s.

We then design an asymmetric private-coin LDP distribution estimation scheme with optimal $\ell_1$ risk and where each user transmits only one bit. The precise result is given in Theorem~\ref{thm:1bit_performance} (Section~\ref{sec:onebit}).

Our scheme does the following. User $i$ is assigned a subset $B$ (can be different for different users) of the domain $[k]$ deterministically (hence private-coin). Upon observing $X_i$, they would like to send $\II\{X_i\in B\}$. To do so with $\eps$-LDP, they flip the random variable with probability $1/(e^\eps+1)$ and and send the result as $Y_i$.  We will choose subsets $B$'s defined by a Hadamard matrix of appropriate size, which not only achieves optimal $\ell_1$ risk, but also allow the server to compute $\hat p$ in nearly-linear time $\tilde{O}(k+n)$. 

Our scheme, while inspired by Hadamard Response, differs from it in the following sense. As described in Section~\ref{sec:hr_heavyhitter}, HR considers a Hadamard matrix where the rows are indexed by the input domain $[k]$. Upon observing a symbol $X_i$, user $i$ considers the $X_i$th row of the Hadamard matrix, and transmits an index of the columns based on whether that entry is a 1 or not in the matrix. This requires about $\log k$ bits. On the other hand, our one-bit scheme assigns to each user, a column of the Hadamard matrix, and the locations in that column that have a $1$ correspond to the subset assigned to that user.

\medskip
\noindent\textbf{Heavy Hitter Estimation.} All known optimal algorithms  for heavy hitter estimation described in the previous section use public randomness. In Theorem~\ref{thm:hadamard_heavy}, we show that HR (symmetric, and no public randomness, and $\log k +2$ bits of communication per user) has the optimal $\ell_\infty$ risk for heavy hitter estimation. 

However, we remark that the computation requirements of HR is $O(k\log k+n)$, which can be much worse than the guarantees in~\cite{BassilyNST17} for $k\gg \ns$.

\checked{Finally, in  Theorem~\ref{thm:heavy_accuracy} we show that unlike for distribution estimation, any private-coin scheme for heavy hitter estimation requires large communication. In particular, we show that optimal private-coin heavy-hitter estimation requires $\Omega(\min\{\log k, \log n\})$ bits of communication per user.} %The result also holds without privacy constraints.

For a complete summary of results, see Table~\ref{tab:dist_learning} and~\ref{tab:heavy-hitters}. \newerest{In each table, the problem becomes easier as we go down in rows and go right in columns.}
% and Table~\ref{tab:freq}.

\paragraph{Organization.}
\checked{In Section~\ref{sec:onebit}, we provide an optimal one-bit private-coin distribution estimation scheme. In Section~\ref{sec:lower_symmetric},  we show that any symmetric private-coin scheme must transmit $\log k$ bits per user. In Section~\ref{sec:hr_heavyhitter}, we prove the optimality of Hadamard Response for heavy hitter estimation and finally in Section~\ref{sec:lower_heavyhitter}, we show that without public randomness, heavy hitter estimation requires large communication.}
%\input{sec-results.tex}

%\section{Preliminaries and Notations} \label{sec:prelim}
%\input{sec-preliminaries.tex}
%\input{related-prior.tex}

%\section{Prior Work and New Results}
%\input{sec-results.tex}

\section{One-Bit Private-coin LDP Distribution Estimation} \label{sec:onebit}
\newest{We propose a deterministic scheme, namely the $W_i$'s are fixed apriori, for LDP distribution estimation that has the optimal $\ell_1$ risk, and where the output of each $W_i$ is binary, i.e., one bit of communication per user. The approach is the following. Each user is assigned to a deterministic set $B\subset[k]$. Upon observing a sample $X\sim\p$, they output $Y\in\{0,1\}$, according to the following distribution}
\begin{equation} \label{eqn:response}
\probof{Y = 1} = \begin{cases}
\frac{e^\eps}{e^\eps +1}, \text{if } X \in B, \\
\frac{1}{e^\eps +1}, \text{otherwise.}
\end{cases}
\end{equation}
In other words, each user sends the indicator of whether their input belongs to a particular subset of the domain. The choice of the subsets is inspired by the Hadamard Response (HR) scheme described in~\cite{AcharyaSZ18a}. A brief introduction of HR can be found in Section~\ref{sec:hr_heavyhitter} where we show that HR is utility-optimal for heavy hitter estimation.

Recall Sylvester's construction of Hadamard matrices.
%\subsection{Hadamard matrix} \label{sec:hadamard}
%We will use Hadamard matrices whose size is a power of 2. For $m$ that is a power of 2, $H_{m}$ is the Hadamard matrix of size $m\times m$, defined by Sylvester's construction: 
%\begin{definition}
%	\label{def:hadamard}
%	Let $H_1\ed[1]$, and for $m= 2^j$, for $j\ge1$, 
%	\[H_m \ed \begin{bmatrix} H_{m/2} & H_{m/2} \\ H_{m/2} & -H_{m/2} \end{bmatrix}.\]
%	%All the entries of $H_m$ are in $\{-1,1\}$.
%\end{definition}
\begin{definition}\label{def:hadamard}
	Let $H_1\ed[1]$, and for $m= 2^j$, for $j\ge1$, 
	\[H_m \ed \begin{bmatrix} H_{m/2} & H_{m/2} \\ H_{m/2} & -H_{m/2} \end{bmatrix}.\]
	%All the entries of $H_m$ are in $\{-1,1\}$.
\end{definition}

\newest{Let $K = 2^{\ceil{\log_2 (k+1)}}$ be the smallest power of 2 larger than $k$. Let $H_K$ be the $K\times K$ Hadamard matrix. For simplicity of working with $H_K$, we assume that the underlying distribution is over $[K]$ by appending $\p$ with zeros, giving $\p_K=(\p(1), \ldots, \p(k), 0, \ldots, 0)$. For $i=1, \ldots, K$, let $B_i$ be the set of all $x\in[K]$, such that $H_K(x,i)=1$, namely the row indices that have `1' in the $i$th column. We associate the subsets for each user as follows. We deterministically divide the $\ns$ users numbered $1,\ldots, \ns$ into $K$ subsets $S_1, S_2, \ldots, S_{K}$, such that}
\[
	S_i := \{ j \in [n] | j \equiv i \pmod{K}\}.
\]
For each user $j$, let $i_j\in[K]$ be the index such that $j\in S_{i_j}$. The $j$th user then sends its binary output $Y_i$ according to the distribution in~\eqref{eqn:response}, with $B=B_{i_j}$, and $X=X_j$. 

For any $i=1, \ldots, K$, the users in $S_i$ have the same output distribution. Let $s_i$ be the probability $Y_j=1$ for $j\in S_i$. Let $\p(B_i) = \probof{X \in B_i | X \sim \p}$. Note that
\begin{align}
\label{eqn:pB2pY}
s_i &= \p(B_i)\cdot \frac{e^{\eps}}{e^\eps +1}+\Paren{1-\p(B_i)} \frac{1}{e^\eps +1} \nonumber \\
&=\frac{1}{e^\eps +1}+ \p(B_i)\cdot \frac{e^{\eps}-1}{e^\eps +1}.
\end{align}

Let $\p_B:=(\p(B_1), \p(B_2), \ldots, \p(B_{K}))$. Then we obtain 
\begin{align}
\mathbf{s} :=(s_1,\ldots, s_{K}) = \frac{1}{e^\eps +1}\mathbf{1}_K+  \frac{e^{\eps}-1}{e^\eps +1}\p_B.
\end{align}

This relates $\p(B_i)$ with $s_i$, and now we relate $\p(x)$ with $\p(B_i)$'s. Recall that $B_1=[K]$, the entire set. Since $B_i$'s are defined by the rows of Hadamard matrix, we obtain the following~\cite{AcharyaSZ18a},
\begin{align}
\label{eqn:hadamard}
	\p_B = \frac{H_K \cdot \p_K + 1_K}{2}.
\end{align}

We can now relate the results and describe our estimate. 
\begin{enumerate}
\item 
	Use an empirical estimate $\widehat{\mathbf{s}}$ for $\mathbf{s}$ as
	\begin{equation} \label{eqn:pY_est}
	\widehat{s}_i := \frac{1}{|S_i|} \sum_{j \in S_i} Y_j.
	\end{equation}
%	which is an unbiased empirical estimate.
\item Motivated by~\eqref{eqn:pB2pY} estimate $\p_B$ as
	\begin{equation} \label{eqn:pB_est}
		\widehat{\p_B} =  \frac{e^\eps +1} {e^\eps - 1} \Paren{\widehat{\mathbf{{s}}}  - \frac{\mathbf{1}_K}{e^\eps +1}}.
	\end{equation}
\item Estimate for the original distribution using~\eqref{eqn:hadamard} as
	\begin{align} \label{eqn:reverse_hadamard}
		\widehat{\p_K}  :=  H_K^{-1} \cdot \Paren{2\widehat{\p_B} - 1_K} = \frac{1}{K} H_K \cdot \Paren{2\widehat{\p_B} - 1_K}.
	\end{align}
%	where $\widehat{p_C} = \Paren{ 1, \widehat{\p_C(1)}, \widehat{\p_C(2)}, ... ,\widehat{\p_C(K-1)}}$
	\item Output $\widehat{\p}$, the projection of the first $k$ coordinates of the $K$ dimensional $\widehat{p_K}$ on the simplex $\triangle_k$.
\end{enumerate}

\begin{theorem}\label{thm:1bit_performance}
	Let $\widehat{\p}$  be the output of the scheme above when the underlying distribution is $\p$. Then,
	\begin{align}
		\expectation{\norm{\widehat{\p} - \p}{2}^2} \le\min \left\{ \frac{2k(e^\eps + 1)^2}{n(e^\eps - 1)^2}, 8\sqrt{\frac{(e^\eps + 1)^2\log{k}}{n(e^\eps - 1)^2}}\right\}.\nonumber
	\end{align}
\end{theorem}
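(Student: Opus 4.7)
The plan is to bound $\expectation{\|\widehat{\p}-\p\|_2^2}$ by tracking the error through the three-step reconstruction $\widehat{\mathbf{s}}\to\widehat{\p_B}\to\widehat{\p_K}\to\widehat{\p}$ and deriving the two listed bounds via complementary second-moment and tail arguments. First I would reduce to the unprojected estimator: since $\widehat{\p}$ is the $\ell_2$-projection of the first $k$ coordinates of $\widehat{\p_K}$ onto the simplex and $\p\in\Delta_k$, projection is non-expansive, so
\[
\|\widehat{\p}-\p\|_2^2\;\le\;\|(\widehat{\p_K})_{1:k}-\p\|_2^2\;\le\;\|\widehat{\p_K}-\p_K\|_2^2,
\]
the final inequality using that $\p_K$ pads $\p$ with zeros. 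Combining~\eqref{eqn:pB_est} and~\eqref{eqn:reverse_hadamard} with $H_K^{-1}=H_K/K$, the error propagates as
\[
\widehat{\p_K}-\p_K \;=\; \tfrac{2}{K}H_K(\widehat{\p_B}-\p_B), \qquad \widehat{\p_B}-\p_B \;=\; \tfrac{e^\eps+1}{e^\eps-1}(\widehat{\mathbf{s}}-\mathbf{s}),
\]
and the coordinates $\widehat{s}_i$ are mutually independent empirical means of $|S_i|\approx n/K$ Bernoulli$(s_i)$ draws.

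For the first bound, I would use Parseval for Hadamard, i.e. $H_K^\top H_K=K\,I_K$, to obtain
\[
\|\widehat{\p_K}-\p_K\|_2^2 \;=\; \tfrac{4}{K}\|\widehat{\p_B}-\p_B\|_2^2 \;=\; \tfrac{4(e^\eps+1)^2}{K(e^\eps-1)^2}\|\widehat{\mathbf{s}}-\mathbf{s}\|_2^2.
\]
Using $\expectation{(\widehat{s}_i-s_i)^2}\le 1/(4|S_i|)$ and summing gives $\expectation{\|\widehat{\mathbf{s}}-\mathbf{s}\|_2^2}\le K^2/(4n)$ up to lower-order corrections from non-uniform $|S_i|$. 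Since $K\le 2k$, this yields
\[
\expectation{\|\widehat{\p_K}-\p_K\|_2^2}\;\le\; \tfrac{K(e^\eps+1)^2}{n(e^\eps-1)^2}\;\le\; \tfrac{2k(e^\eps+1)^2}{n(e^\eps-1)^2}.
\]

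For the second bound, I would combine coordinate-wise Hoeffding concentration with the elementary inequality $\|\widehat{\p}-\p\|_2^2\le \|\widehat{\p}-\p\|_\infty\cdot\|\widehat{\p}-\p\|_1\le 2\|\widehat{\p}-\p\|_\infty$. Expanding
\[
(\widehat{\p_K}-\p_K)_x \;=\; \tfrac{2(e^\eps+1)}{K(e^\eps-1)}\sum_{i=1}^K\sum_{j\in S_i}\tfrac{H_K(x,i)}{|S_i|}(Y_j-s_i)
\]
as a sum of $n$ independent zero-mean terms of range at most $\tfrac{2(e^\eps+1)}{K(e^\eps-1)|S_i|}$, the sum of squared ranges is $\tfrac{4(e^\eps+1)^2}{K^2(e^\eps-1)^2}\sum_i\tfrac{1}{|S_i|}\le\tfrac{4(e^\eps+1)^2}{n(e^\eps-1)^2}$. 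Hoeffding then gives $\Pr\bigl[|(\widehat{\p_K}-\p_K)_x|>\tau\bigr]\le 2/K^2$ for $\tau=2\sqrt{(e^\eps+1)^2\log K/(n(e^\eps-1)^2)}$, and a union bound over $x\in[K]$ yields $\|\widehat{\p_K}-\p_K\|_\infty\le\tau$ with probability at least $1-2/K$. A short soft-thresholding argument for $\ell_2$-projection onto the simplex shows the common scalar shift introduced by projection is itself $O(\tau)$, so $\|\widehat{\p}-\p\|_\infty=O(\tau)$ on the good event. Integrating against the $O(1/K)$ bad event, on which we fall back on the first bound, converts this into $\expectation{\|\widehat{\p}-\p\|_2^2}\le 8\sqrt{(e^\eps+1)^2\log k/(n(e^\eps-1)^2)}$, using $\log K\le 2\log k$.

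The main delicacy is the final step of the second bound: translating $\ell_\infty$ concentration for $\widehat{\p_K}$ into an $\ell_\infty$ bound on the projected $\widehat{\p}$. One must control the soft-threshold shift $\lambda$, which can be done because it is essentially the deviation of $\sum_x\widehat{\p_K}_x$ from $1$, and this sum is determined by $\widehat{s}_1$ alone (since only the first column of $H_K$ has nonzero column-sum) and hence concentrates at rate $O(\sqrt{K/n}\cdot(e^\eps+1)/(e^\eps-1))$. A secondary bookkeeping issue is that when $K\nmid n$ the group sizes $|S_i|$ differ by one, but this affects the Parseval-style variance sum and the Hoeffding range bound only at lower order.
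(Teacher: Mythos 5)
Your Parseval-based derivation of the first bound is essentially the paper's argument (the paper bounds $\variance{\widehat{\p_K}(x)}$ coordinate-by-coordinate using the $\pm 1/K$ Hadamard weights and sums over $k$ coordinates; you package the same computation via $H_K^\top H_K = K I_K$ and $K\le 2k$), so that part is fine. For the second bound you take a genuinely different route. The paper invokes Lemma~\ref{lem:projection} (Corollary~2.3 of~\cite{Bassily18}), which directly gives the deterministic inequality $\|\widehat{\p}-\p\|_2^2 \le 4\max_x |\widehat{\p_K}(x)-\p_K(x)|$, and then simply takes expectation using the expected-max bound for sub-Gaussians (Theorem~1.16 of~\cite{Rigollet15}). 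You instead use $\|\widehat{\p}-\p\|_2^2\le 2\|\widehat{\p}-\p\|_\infty$ and try to transfer $\ell_\infty$ control from $\widehat{\p_K}$ to $\widehat{\p}$ by analyzing the soft-threshold shift of the simplex projection, then derive a high-probability bound and integrate. The projection lemma the paper uses is precisely what makes the "delicacy" you correctly flag disappear: it folds the effect of the simplex projection into a single clean step, with no separate $\ell_\infty$ analysis of the projected vector.

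Your route can be made to work, but as written it has two gaps. First, the reasoning that the shift $\lambda$ is controlled because it "is essentially the deviation of $\sum_x\widehat{\p_K}(x)$ from $1$" is not the right mechanism: that sum concentrates only at rate $\Theta\bigl(\tfrac{e^\eps+1}{e^\eps-1}\sqrt{K/n}\bigr)$, much worse than $\tau$, and in any case $\lambda$ is not a simple function of the sum once some coordinates get clipped to zero. The correct (and cleaner) argument is that $\lambda$ is bounded by $\|v-\p\|_\infty$ alone: if $\|v-\p\|_\infty\le\tau$ for $\p\in\triangle_k$, then $\sum_x (v_x+\tau)_+\ge\sum_x p_x=1$ and $\sum_x(v_x-2\tau)_+\le\sum_x(p_x-\tau)_+\le 1$, so $\lambda\in[-\tau,2\tau]$ and hence $\|\Pi_{\triangle_k}(v)-\p\|_\infty\le 3\tau$; the sum of $\widehat{\p_K}$ plays no role. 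Second, the high-probability-plus-integration step is fragile: with your Hoeffding threshold the bad event has probability $\Theta(1/K)$, and the fallback cost $O(1/K)$ is not dominated by $\tau=\Theta\bigl(\tfrac{e^\eps+1}{e^\eps-1}\sqrt{\log k/n}\bigr)$ once $n\gg k^2\log k$, so the stated inequality $\expectation{\|\widehat{\p}-\p\|_2^2}\le 8\sqrt{(e^\eps+1)^2\log k/(n(e^\eps-1)^2)}$ is not actually established for all $n$; falling back to "the first bound" on the bad event is also not legitimate since that bound is an expectation, not an almost-sure bound. The fix is to avoid the good/bad-event split entirely and write $\expectation{\|\widehat{\p}-\p\|_2^2}\le 6\,\expectation{\max_x|\widehat{\p_K}(x)-\p_K(x)|}$ and then bound the expected maximum of sub-Gaussians directly, as the paper does after its Lemma~\ref{lem:projection}; even then the constant is larger than the paper's $8$.
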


\begin{proof}
First note that $\widehat{\mathbf{s}}$ is an unbiased estimator of $\mathbf{s}$, and~\eqref{eqn:hadamard},~\eqref{eqn:pB2pY} and~\eqref{eqn:pY_est}, are all linear. Therefore, $\widehat{\p_K}$ is an unbiased estimator of $\p_K$. Hence,
	\begin{align}
		\expectation{\norm{\widehat{\p_K}(1: k) - \p_K(1: k)}{2}^2} = \sum_{x = 1}^{k} \variance{\widehat{\p_K(x)}}. \nonumber 
	\end{align}
From~\eqref{eqn:reverse_hadamard}, $\widehat{\p_K(x)}$ is a weighted sum of $\{(2\widehat{\p_B(i)} - 1)\}_{i=1}^{K}$ with coefficients either $+\frac{1}{K}$ or $-\frac{1}{K}$. Hence $\forall x \in [K]$,
	\begin{align}
		\variance{\widehat{\p_K(x)}} & \le \frac{4}{K^2} \sum_{y = 1}^{K} \variance{\widehat{\p_B(y)} } = \frac{4}{K^2} \Paren{\frac{e^\eps +1} {e^\eps - 1} }^2  \sum_{i = 1}^{K} \variance{\widehat{s_i} }. \nonumber 
	\end{align}
	By~\eqref{eqn:pY_est}, $\widehat{s_i}$ is an average of $|S_i|$ independent Bernoulli random variables with $|S_i| \ge \frac{n}{2K}$, hence $\forall i \in [K]$,
	\begin{align}
		\variance{\widehat{s_y} } \le \frac{K}{2n}. \nonumber 
	\end{align}
	Combining these, we get: $\forall x \in [K]$,
	\begin{align}
		\variance{\widehat{\p_K(x)}} = \frac{4}{K^2} \Paren{\frac{e^\eps +1} {e^\eps - 1} }^2  \sum_{i = 1}^{K} \variance{\widehat{s_i} }  
		\le \frac{2(e^\eps +1)^2} {n(e^\eps - 1)^2} . \nonumber 
	\end{align}
%\begin{align}
%	\variance{\widehat{\p_K(x)}} & = \frac{4}{K^2} \Paren{\frac{e^\eps +1} {e^\eps - 1} }^2  \sum_{y = 1}^{K} \variance{\widehat{s_y} }  \nonumber \\
%	&\le \Paren{\frac{e^\eps +1} {e^\eps - 1} }^2 \frac{2}{n}. \nonumber 
%\end{align}
	Then the final estimate $\widehat{\p}$ is the projection of $\p_K$ on the first $k$ coordinates onto the simplex $\triangle_k$. Since $\triangle_k$ is convex,
	\begin{align}
		\expectation{\norm{\widehat{\p} - \p}{2}^2} & \le  \expectation{\norm{\widehat{\p_K}(1: k) - \p_K(1: k)}{2}^2}  \nonumber \\
		& \le \sum_{i = 1}^k \Paren{\frac{e^\eps +1} {e^\eps - 1} }^2 \frac{2}{n} \le \frac{2k(e^\eps + 1)^2}{n(e^\eps - 1)^2}. \nonumber 
	\end{align}
Moreover, we have the following lemma.
\begin{lemma} \label{lem:projection}
	(Corollary 2.3~\cite{Bassily18}) Let $L\subset R^d$ be a symmetric convex body of $k$ vertices $\{\mathbf{a_j}\}_{j = 1}^k$, and let $\mathbf{y} \in L$ and $\mathbf{\bar{y}} = \mathbf{y}+\mathbf{z} $ for some $\mathbf{z} \in R^d$. Let $\mathbf{\hat{y}} = \arg\min_{\mathbf{w}\in L} \norm{\mathbf{w} - \mathbf{\bar{y}}}{2}^2$. Then, we must have:
	\begin{align}
		\norm{\mathbf{y} - \mathbf{\hat{y}}}{2}^2 \le 4 \max_{j\in[k]}\{\langle\mathbf{z}, \mathbf{a_j}\rangle\}
	\end{align}
\end{lemma}
	
	Notice that according to~\eqref{eqn:pB_est},~\eqref{eqn:pY_est} and~\eqref{eqn:pB2pY}, $ \{ \widehat{\p_B}(i) - \p_B(i) \}_{i=1}^K$ are empirical averages of independent zero mean Bernoulli random variables scaled by constant $\frac{e^\eps +1} {e^\eps - 1}$ and they are mutually independent. Hence, they are sub-Gaussian with variance proxy $\frac{K}{2n}\Paren{\frac{e^\eps +1} {e^\eps - 1}}^2$.
	
	Additionally, by~\eqref{eqn:pB2pY} and~\eqref{eqn:reverse_hadamard}, we know each of $\{\widehat{\p_K(x)} - \p_K(x)\}$ is a linear combination of $\{ \widehat{\p_B}(i) - \p_B(i) \}_{i=1}^K$ with coefficient either $+ \frac{2}{K}$ or $-\frac{2}{K}$. Hence $\{\widehat{\p_K(x)} - \p_K(x)\}$'s are also sub-Gaussian with variance proxy $\frac{2}{n}(\frac{e^\eps +1} {e^\eps - 1})^2$ (see Corollary 1.7~\cite{Rigollet15}).
	
	Hence using Lemma~\ref{lem:projection}, we have:
	\begin{align}
		\expectation{\norm{\widehat{\p} - \p}{2}^2} & \le 4 \expectation{\max_{x=1}^{k} |\widehat{\p_K(x)} - \p_K(x)|}   \le 8\sqrt{\frac{(e^\eps + 1)^2\log{k}}{n(e^\eps - 1)^2}}. \nonumber 
	\end{align}

	The last step is due to a well-known bound on expectation of maximum of sub-Gaussian random variables (see Theorem 1.16~\cite{Rigollet15}).
\end{proof}

\begin{corollary} \label{thm:onebit_l1}
	Let $\widehat{\p}$  be the estimate given by the scheme described above. Then for any input $\p$, 
	\begin{align}
		\expectation{\norm{\widehat{\p} - \p}{1}} \le \sqrt{\frac{2k^2(e^\eps + 1)^2}{n(e^\eps - 1)^2}}.\nonumber
	\end{align}
\end{corollary}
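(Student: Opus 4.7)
The plan is to derive the $\ell_1$ bound directly from the $\ell_2^2$ bound already established in Theorem~\ref{thm:1bit_performance}, using only Cauchy--Schwarz and Jensen's inequality. Since $\widehat{\p}$ and $\p$ both lie in $\Delta_k$, the difference $\widehat{\p} - \p$ is a $k$-dimensional vector, so by Cauchy--Schwarz we have
\[
\norm{\widehat{\p} - \p}{1}^2 \;\le\; k\cdot \norm{\widehat{\p} - \p}{2}^2.
\]

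Taking expectations and then applying Jensen's inequality to the concave function $\sqrt{\cdot}$,
\[
\expectation{\norm{\widehat{\p} - \p}{1}} \;\le\; \sqrt{\expectation{\norm{\widehat{\p} - \p}{1}^2}} \;\le\; \sqrt{k \cdot \expectation{\norm{\widehat{\p} - \p}{2}^2}}.
\]

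Finally, I would substitute the first of the two bounds from Theorem~\ref{thm:1bit_performance}, namely $\expectation{\norm{\widehat{\p} - \p}{2}^2} \le \frac{2k(e^\eps + 1)^2}{n(e^\eps - 1)^2}$, to conclude
\[
\expectation{\norm{\widehat{\p} - \p}{1}} \;\le\; \sqrt{k \cdot \frac{2k(e^\eps + 1)^2}{n(e^\eps - 1)^2}} \;=\; \sqrt{\frac{2k^2(e^\eps + 1)^2}{n(e^\eps - 1)^2}},
\]
which is the claimed bound. There is no real obstacle here; the corollary is essentially a reformulation of the $\ell_2^2$ bound, and the only thing to be careful about is applying Cauchy--Schwarz in the correct dimension $k$ (since the projection step in the algorithm reduces the estimate from $K$ to $k$ coordinates before taking the norm).
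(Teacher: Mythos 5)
Your proof is correct and matches the paper's argument exactly: both use Cauchy--Schwarz in dimension $k$ followed by Jensen to get $\expectation{\norm{\widehat{\p} - \p}{1}} \le \sqrt{k\,\expectation{\norm{\widehat{\p} - \p}{2}^2}}$, then plug in the first bound from Theorem~\ref{thm:1bit_performance}. You simply spell out the Jensen step that the paper leaves implicit.
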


\begin{proof}
	By Cauchy-Schwarz inequality, 
	\[\expectation{\norm{\widehat{\p} - \p}{1}} \le \sqrt{k \expectation{\norm{\widehat{\p} - \p}{2}^2}}.
	\]
	 Plugging in Theorem~\ref{thm:1bit_performance} gives the bound. 
%	\begin{align}
%		\expectation{\norm{\widehat{\p} - \p}{1}} \le \sqrt{k \expectation{\norm{\widehat{\p} - \p}{2}^2}} \le \sqrt{\frac{k^2(e^\eps + 1)^2}{2n(e^\eps - 1)^2}}. \nonumber 
%	\end{align}
\end{proof}
\newerest{Notice here that $e^\eps-1=O(\eps)$ when $\eps = O(1)$. Hence we have $\expectation{\norm{\widehat{\p} - \p}{1}} = O\left(\sqrt{\frac{k^2}{n\eps^2}}\right)$, which is order optimal.}

%\textcolor{red}{SHOULD WE MENTION SOMWEHERE THAT $e^\eps-1=O(1)$??}

\section{Lower Bound on Communication Complexity of Symmetric Schemes} \label{sec:lower_symmetric}
{We show that any private-coin symmetric distribution estimation scheme must communicate at least $\log k$ bits.}

\begin{theorem} \label{thm:dist_sym}
{For any symmetric private-coin scheme without shared randomness that transmits $\ell<\log k$ bits per user, there exists a distribution $\p_0 \in \triangle_k$ such that for $\Xon\sim\p_0$, 
	\begin{equation} \label{eqn:dist_sym}
		\EE\left[ \norm{\hat{\p} (Y^n)- \p_0}{1}\right] \ge 1,
	\end{equation}
	where $Y^\ns$ are the messages sent to $\cR$ after privatizing $\Xon$. }
\end{theorem}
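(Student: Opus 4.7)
The plan is to use a two-point (Le Cam style) indistinguishability argument. By Lemma~\ref{lem:deterministic}, since the scheme is private-coin and symmetric, we may assume all users use a single deterministic channel $W$, which we view as a row-stochastic $k \times 2^\ell$ matrix (each row $W(x,\cdot)$ is a distribution on $\{0,1\}^\ell$). Under input distribution $\p\in\triangle_k$, each $Y_i$ is i.i.d.\ with distribution $T\p := W^\top \p \in \triangle_{2^\ell}$. Hence the joint law of $Y^n$ depends on $\p$ only through $T\p$.

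The core step is to exhibit two distributions $\p_1,\p_2\in\triangle_k$ with disjoint supports (so $\norm{\p_1-\p_2}{1}=2$) such that $T\p_1 = T\p_2$. Since $\ell < \log k$, we have $2^\ell < k$, so $T:\RR^k\to\RR^{2^\ell}$ has a nontrivial kernel. Pick any nonzero $v\in\ker T$. Because $W$ is row-stochastic, $W\mathbf{1}=\mathbf{1}$, so $\mathbf{1}^\top v = \mathbf{1}^\top T v = 0$, i.e.\ $\sum_x v_x = 0$. Writing $v = v^+ - v^-$ as the positive/negative parts (with disjoint supports), we get $\norm{v^+}{1} = \norm{v^-}{1} =: c > 0$. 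Set $\p_1 := v^+/c$ and $\p_2 := v^-/c$. Then $\p_1,\p_2\in\triangle_k$ have disjoint supports, $\norm{\p_1-\p_2}{1}=2$, and $T\p_1 - T\p_2 = Tv/c = 0$.

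To conclude, for any estimator $\hat\p$, since $Y^n$ has identical distribution under $\p_1$ and $\p_2$,
\begin{align*}
\EE_{\p_1}\!\brack{\norm{\hat\p-\p_1}{1}} + \EE_{\p_2}\!\brack{\norm{\hat\p-\p_2}{1}} &= \EE\!\brack{\norm{\hat\p(Y^n)-\p_1}{1} + \norm{\hat\p(Y^n)-\p_2}{1}} \\
&\ge \norm{\p_1-\p_2}{1} = 2,
\end{align*}
where the expectation on the right is under the common law of $Y^n$, and the inequality is the triangle inequality. Thus $\max_{j\in\{1,2\}}\EE_{\p_j}\brack{\norm{\hat\p-\p_j}{1}}\ge 1$, and taking $\p_0$ to be the maximizer proves~\eqref{eqn:dist_sym}.

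The only non-routine step is the existence of $\p_1,\p_2$ with \emph{disjoint supports} (rather than merely distinct) colliding under $T$, which is what drives the risk to the maximum possible value $1$ instead of something smaller. This is handled cleanly by the observation that $\ker T$ automatically lies in the hyperplane $\{\sum_x v_x = 0\}$ (a consequence of row-stochasticity of $W$), so the positive/negative decomposition of any kernel element yields two probability vectors of equal total mass and disjoint support.
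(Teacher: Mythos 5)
Your proof is correct and follows essentially the same route as the paper: reduce to a single deterministic channel $W$ via Lemma~\ref{lem:deterministic}, use that $W^\top$ has a nontrivial kernel lying in the hyperplane $\sum_x v_x = 0$ (by row-stochasticity), decompose a kernel vector into positive and negative parts to obtain two disjointly supported distributions colliding under the channel, and conclude by indistinguishability. The only cosmetic difference is that you spell out the final two-point lower bound via the triangle inequality, which the paper states more tersely.
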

\begin{proof}
%\checked{Let $m:=2^\ell<k$, and wlog assume that $\cY=[m]$, the output alphabet. Then any $\ell$-bit communication scheme that takes as input an element from $[k]$, and outputs a symbol in $[m]$, can be described as a transition probability matrix (TPM) $W\in\RR^{k \times m}$, where
%	\[
%		W(x,y) := \probof{Y = y| X = x}.
%	\]
%	When the input distribution is $\p$, the distribution of the output message is $\q = W^T \p$. Let $\bar{W_i}\RR^{k \times m}$ denote the expected transition matrix for user $i$, where the expectation is over the distribution of $U_i$'s. Since we consider symmetric schemes, let $\bar{W}=\bar{W_1}=\ldots=\bar{W_\ns}$.}
{Assume that $\cY=[2^\ell]$ is the output alphabet. By Lemma~\ref{lem:deterministic}, and symmetry, let $W$ be an $\ell$-bit communication channel used by each user. We can describe $W$ as a transition probability matrix (TPM) $W\in\RR^{k \times 2^\ell}$:
	\[
	W(x,y) := \probof{Y = y| X = x}.
	\]
	When the input distribution is $\p$, the distribution of the output message is $\q = W^T \p$. Notice that $W^T$ is an $2^\ell \times k$ matrix, which is underdetermined since $2^\ell<k$. Therefore, there exists a non-zero vector $\mathbf{e}$ such that $W^T \mathbf{e} = 0$. Further, since $W$ is a TPM, each row of $W$ sums to one, and therefore $W^T \mathbf{e} = 0$ implies that $\sum_{x = 1}^{k} \mathbf{e}(x) = 0$.}

%\newerest{By linearity of expectations,  the distribution of $Y_i$ is $\bar{W}^T\p$. Notice that $\bar{W}^T$ is an $m \times k$ matrix, which is underdetermined since $m<k$. Therefore, there exists a non-zero vector $\mathbf{e}$ such that $W^T \mathbf{e} = 0$. Further, since $W$ is a TPM, 
%	\[
%		\sum_{x = 1}^{k} \mathbf{e}(x) = 0.
%	\]}

By scaling appropriately, we can ensure that $\|\mathbf{e} \|_1 = 2$, which ensures that the positive entries sum to one, and negative entries sum to $-1$. Now consider the distributions specified by these entries, namely let $\p_1 = \max\{\mathbf{e}, 0\}$ and $\p_2 = \max\{-\mathbf{e}, 0\}$. Then these two distributions have \emph{disjoint support}, however, 
	\[
		W^T \p_1 = \q = W^T \p_2,
	\]
	showing that their output message distributions are identical and  they cannot be distinguished. Since $\norm{\p_1- \p_2}{1}= 2$, when we get $Y^n \sim q^n$, for at least one of these distributions, the  expected $\ell_1$ error is 1, proving the result. 
%Let $Y^n$ be $n$ independent samples from $\q$. Then we have:
%	\[
%		\expectation{\dli{\hat{\p} (Y^n)}{\p_1} }+ \expectation{\dli{\hat{\p} (Y^n)}{\p_2} } =  \expectation{\dli{\hat{\p} (Y^n)}{\p_1}  + \dli{\hat{\p} (Y^n}{\p_2}  } \ge  \expectation{\dli{\p_1}{\p_2} } = 1
%	\]
%	And we know one of $\p_1$ and $\p_2$ will satisfy Equation~\ref{eqn:dist_sym}.
\end{proof}

\checked{Note that Theorem~\ref*{thm:dist_sym} holds for all symmetric schemes, not just $\eps$-LDP schemes, which means the result also extends to non-private setting, proving the importance of asymmetry in communication efficient distribution estimation. Further, with just two more bits, using $\log k+2$ bits, HR is private-coin, symmetric, and does optimal distribution estimation.}  % In Section~\ref{sec:hr_heavyhitter}, we will show that, $O(\log k)$ bits of communication is also enough for a symmetric scheme to achieve optimal performance for heavy hitter detection problem using the Hadamard Response scheme proposed in~\cite{AcharyaSZ18a}.

\section{Hadamard Response is Optimal for Heavy Hitter Estimation}\label{sec:hr_heavyhitter}

We first describe the scheme briefly, and prove the optimality. We refer the reader to~\cite{AcharyaSZ18a} for details. Recall that $K$ is the smallest power of 2 larger than $k$. Let $H_K$ be the $K\times K$ Hadamard matrix. The output alphabet of the messages is $\cY:=[K]$ %For $i=1, 2, \ldots, K$ let $B_i\subseteq[K]$ denote the column indices that have a `1' in the $i$th row. Note here $B_1 = [K]$. 
For each input symbol $x\in\{1, \ldots, k\}$, let $C_x$ be the symbols $y\in[K]$ such that there is a 1 in the $y$th column of the $(x+1)$th row of $H_K$. The reason we start with the second row is because the first row of $H_K$ is all one's. 
%Consider the sets $B_1, \ldots, B_{K}$ from the previous section. For each $x\in[k]$, associate $C_x:=B_{x+1}$. 
Since $H_K$ is Hadamard, 
\begin{enumerate}
	\item $\forall x \in [k], |C_x| = \frac{K}{2}$, and 
	\item $\forall x \neq x' \in [k], |C_x \cap C_{x'}| = \frac{K}{4}$.
\end{enumerate}
HR is the following symmetric privatization scheme for all user with output $y\in[K]$, $x\in[k]$,
\begin{equation} \label{eqn:hadamard_response}
	W(x,y) = 
	\begin{cases}
		\frac{2 e^\eps }{ K (1 + e^\eps) } \text{ if } y \in C_x,\\
		\frac{2  }{ K (1 + e^\eps) } \text{ otherwise.}
	\end{cases}
\end{equation}

Consider an arbitrary input $X^\ns$, with $\Mlt{x}$ being the number of appearances of  $x$'s in $X^n$. Let $\Mlt{C_x} :=  \sum_{i \in [n]}  \mathbf{1} \{ Y_i \in C_x \}$ be the number of output symbols that are in $C_x$. Then, we have
\begin{align} 
	\expectation{\Mlt{C_x} } & = \sum_{i \in [n]} \expectation{ \mathbf{1} \{ Y_i \in C_x \} } = \sum_{i \in [n]} \probof{ Y_i \in C_x } \nonumber  \\
	& = \sum_{i \in [n]} \Paren{ \mathbf{1} \{ X_i = x \} \frac{ e^\eps }{ 1 + e^\eps } + \mathbf{1} \{ X_i \neq x \}  \frac{ 1 }{ 2 } } \nonumber \\
	& = \frac{e^\eps -1}{2(e^\eps + 1)}\Mlt{x} + \frac{n}{2}.\label{eqn:unbiased}
\end{align}
Hence,
\begin{equation} \label{eqn:estimator}
	\hat{\p}(x) = \frac{2(e^\eps + 1)}{n(e^\eps -1)} \Paren{\Mlt{C_x} - \frac{n}{2}}.
\end{equation}
is an unbiased estimator for $\frac{\Mlt{x}}{n}$. The performance of the estimator is stated in Theorem~\ref{thm:hadamard_heavy}.

\begin{theorem} \label{thm:hadamard_heavy}
	For any dataset $X^n$, the encoding scheme in~\eqref{eqn:hadamard_response} combined with the estimation scheme in~\eqref{eqn:estimator} satisfies that:
	\begin{equation} \label{eqn:err_infty}
	\expectation{\max_{x \in [k]} \absv{ \hat{\p}(x) - \frac{\Mlt{x}}{n}} } \le \frac{4(e^\eps+1)}{e^\eps-1} \sqrt{\frac{\log k}{n}}.
	\end{equation}
\end{theorem}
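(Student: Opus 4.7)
The plan is to use the unbiasedness identity from~\eqref{eqn:unbiased} to write, for each fixed $x\in[k]$, the random variable $Z_x := \hat{p}(x) - \Mlt{x}/n$ as a centered sum of independent bounded contributions, establish a sub-Gaussian tail for each $Z_x$, and then take a union bound through the standard maximal inequality for sub-Gaussians. The final $\sqrt{\log k}$ factor in~\eqref{eqn:err_infty} will come entirely from this union bound.

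First I would fix $x$ and define $Z_i^{(x)} := \mathbf{1}\{Y_i\in C_x\} - \mathbb{E}[\mathbf{1}\{Y_i\in C_x\}]$. Because the users privatize independently, the $Z_i^{(x)}$'s are independent across $i$, each bounded in $[-1,1]$ and of mean zero, so by Hoeffding's lemma each $Z_i^{(x)}$ is sub-Gaussian with variance proxy $1/4$. Hence $\sum_{i=1}^n Z_i^{(x)} = \Mlt{C_x} - \mathbb{E}[\Mlt{C_x}]$ is sub-Gaussian with variance proxy $n/4$. Rescaling by $\frac{2(e^\eps+1)}{n(e^\eps-1)}$ and using~\eqref{eqn:unbiased} together with~\eqref{eqn:estimator} to identify this rescaled, centered sum with $Z_x = \hat p(x)-\Mlt{x}/n$, I conclude that $Z_x$ is sub-Gaussian with variance proxy
\[
\sigma^2 \;=\; \Paren{\frac{2(e^\eps+1)}{n(e^\eps-1)}}^{\!2}\!\cdot\frac{n}{4} \;=\; \frac{(e^\eps+1)^2}{n(e^\eps-1)^2}.
\]

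Next I would invoke the standard bound on the expected maximum of $k$ sub-Gaussians (and their negatives), cited in the paper as Theorem~1.16 in~\cite{Rigollet15}, giving
\[
\EE\Brack{\max_{x\in[k]}\absv{Z_x}} \;\le\; \sigma\sqrt{2\log(2k)} \;\le\; \frac{4(e^\eps+1)}{e^\eps-1}\sqrt{\frac{\log k}{n}},
\]
after absorbing constants for $k\ge 2$. Note that I do not need the $Z_x$'s to be independent across $x$: the maximal inequality only uses the marginal sub-Gaussian property of each $Z_x$, which is what the computation above gives.

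The argument is essentially a bookkeeping exercise; there is no real obstacle. The only point requiring care is the rescaling step, where one has to verify that $\hat p(x)-\Mlt{x}/n$ is exactly $\frac{2(e^\eps+1)}{n(e^\eps-1)}(\Mlt{C_x}-\mathbb{E}[\Mlt{C_x}])$, which follows from~\eqref{eqn:unbiased} by solving for $\Mlt{x}/n$ in terms of $\mathbb{E}[\Mlt{C_x}]$ and matching with the definition of $\hat p$ in~\eqref{eqn:estimator}. Everything else is off-the-shelf concentration.
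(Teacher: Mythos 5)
Your proof is correct and follows essentially the same route as the paper's: express $\hat p(x)-\Mlt{x}/n$ as a scaled, centered sum of independent Bernoullis, conclude sub-Gaussianity, and apply the expected-maximum bound for sub-Gaussians (Rigollet's Theorem~1.16). The only difference is bookkeeping---you use the sharp Hoeffding variance proxy $1/4$ per Bernoulli and get $\sigma^2=\frac{(e^\eps+1)^2}{n(e^\eps-1)^2}$, whereas the paper quotes the looser $\frac{4(e^\eps+1)^2}{n(e^\eps-1)^2}$; both land within the stated constant $4$ after the $\sqrt{2\log(2k)}\le 2\sqrt{\log k}$ step.
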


\begin{proof}
By~\eqref{eqn:unbiased}, we know the estimator is unbiased. Since each $\Mlt{C_x}$ is a sum of $n$ independent Bernoulli random variables, $\hat{\p}(x)$'s are sub-Gaussian with varaince proxy $\frac{4(e^\epsilon+1)^2}{n(e^\epsilon-1)^2}$. Hence, by Theorem 1.16 from~\cite{Rigollet15}, we get the result in~\eqref{eqn:err_infty}.
\end{proof}

In~\cite{BassilyS15}, a matching lower bound of $\Omega\left(\frac{1}{\eps} \sqrt{\frac{\log k}{n}}\right)$ when $\eps = O(1)$ is proved for LDP heavy hitter estimation algorithms. The above theorem shows that the proposed algorithm has optimal performance. We remark that this scheme has communication complexity of $\log k$ bits per user, and the total computation complexity is $O(k\log k+n)$. The dependence on $k$ is usually undesirable in this problem, and therefore more sophisticated schemes are designed, which require higher communication complexity or shared randomness.

\section{Communication Lower Bounds for Heavy Hitter Estimation}\label{sec:lower_heavyhitter}
\newest{The previous section showed that with $\ell=\log k+2$ bits of communication per user we can solve heavy hitters problem optimally. In this section, we assume that $\ell<\log k -2$, and prove that there is no private-coin heavy hitter detection scheme that communicates $o(\log n)$ bits per user and is optimal.}

\begin{theorem} \label{thm:heavy_accuracy}
	Let $\ell<\log k-2$. For all private-coin response schemes $(\{W_i\}_{i=1}^n, \hat{\p})$ with only private randomness and $\ell$ bits of communication, there exists a dataset $X_1,\ldots, X_\ns$ with $n > 12(2^\ell +1)^2$, and $x_0\in[k]$ such that: 
	\begin{equation*} \label{eqn:heavy_accruacy}
		\expectation{ \left\lVert{\hat{\p}(\Yon)(x_0) - \frac{\Mlt{x}(X^n)}{n}}\right\rVert_{\infty}} \ge \frac{1}{2^{\ell + 2} +4} ,
	\end{equation*}
	where $Y_i = W_i(X_i)$ for $i \in [n]$.
\end{theorem}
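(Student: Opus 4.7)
The plan is to establish the lower bound via a Le Cam-style two-point argument that exploits a multi-point geometric dependence in each user's channel. Because $\ell<\log k-2$ implies $k>2^{\ell+2}$, the $k$ output distributions $\{W_i(x,\cdot)\}_{x\in[k]}\subset\Delta_{2^\ell}$ produced by any user $i$ live in a simplex of affine dimension only $2^\ell-1$, so any $2^\ell+1$ of them must be affinely dependent. The idea is to turn this dependence into two random dataset priors $\pi_\alpha,\pi_\beta$ whose induced output laws are identical while their expected empirical frequencies differ by $\Omega(1/2^\ell)$.

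\textbf{Step 1 (Radon collision).} First fix $T:=2^\ell+1$ and pick any $T$ distinct symbols $z_1,\ldots,z_T\in[k]$ (possible since $k>2^{\ell+2}>T$). For every user $i$ apply Radon's theorem to the $T$ points $\{W_i(z_j,\cdot)\}_{j=1}^T$ in $\Delta_{2^\ell}$: it yields a partition $[T]=A_i\sqcup B_i$ and probability vectors $\alpha^{(i)},\beta^{(i)}\in\Delta_T$ supported on $A_i,B_i$ respectively with
\[
\sum_{j=1}^T\alpha^{(i)}_j\, W_i(z_j,\cdot)\;=\;\sum_{j=1}^T\beta^{(i)}_j\, W_i(z_j,\cdot).
\]
These are two ``Radon-equivalent'' input mixtures per user with disjoint supports in $[T]$.

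\textbf{Step 2 (Indistinguishable priors).} Next I would define two dataset priors on $\{z_1,\ldots,z_T\}^n$: under $\pi_\alpha$ each $X_i$ is drawn independently from $\alpha^{(i)}$, and under $\pi_\beta$ from $\beta^{(i)}$. Step~1 forces each $Y_i$ to have the same marginal law under both priors, so by independence the joint law of $Y^n$ is identical under $\pi_\alpha$ and $\pi_\beta$; in particular $\hat{\p}(Y^n)$ has the same distribution, and hence the same expectation, under both.

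\textbf{Step 3 (Sign alignment and pigeonhole).} Let $\delta^{(i)}_j:=\alpha^{(i)}_j-\beta^{(i)}_j$. Because $\alpha^{(i)},\beta^{(i)}$ have disjoint supports in $[T]$ and both sum to one, $\|\delta^{(i)}\|_1=2$, so $\sum_i\sum_j|\delta^{(i)}_j|=2n$ and there exists $j^\star\in[T]$ with $\sum_i|\delta^{(i)}_{j^\star}|\ge 2n/T$. The Radon pair $(\alpha^{(i)},\beta^{(i)})$ is symmetric under swap, so swapping $\alpha^{(i)}\leftrightarrow\beta^{(i)}$ independently per user preserves Step~1 and hence the identity of the output laws. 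I would swap exactly those users with $\delta^{(i)}_{j^\star}<0$; afterwards every $\delta^{(i)}_{j^\star}\ge 0$ while $\pi_\alpha,\pi_\beta$ remain indistinguishable from the server's view, and the resulting priors satisfy
\[
\bigl|\EE_{\pi_\alpha}[\Mlt{z_{j^\star}}/n]-\EE_{\pi_\beta}[\Mlt{z_{j^\star}}/n]\bigr|\;=\;\frac{1}{n}\sum_i\delta^{(i)}_{j^\star}\;\ge\;\frac{2}{2^\ell+1}.
\]

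\textbf{Step 4 (Finalize).} Since $\hat{\p}(Y^n)(z_{j^\star})$ has the same expectation under both priors, the gap above combined with the triangle inequality and $|\EE X|\le\EE|X|$ yields
\[
\EE_{\pi_\alpha}\bigl[|\hat{\p}(z_{j^\star})-\Mlt{z_{j^\star}}/n|\bigr]+\EE_{\pi_\beta}\bigl[|\hat{\p}(z_{j^\star})-\Mlt{z_{j^\star}}/n|\bigr]\;\ge\;\frac{2}{2^\ell+1}.
\]
Thus under at least one prior the expected absolute error at coordinate $z_{j^\star}$ is $\ge 1/(2^\ell+1)$, and by averaging there exists a deterministic $X^n$ in that prior's support with $\EE[|\hat{\p}(Y^n)(z_{j^\star})-\Mlt{z_{j^\star}}(X^n)/n|]\ge 1/(2^\ell+1)$, which bounds the $\ell_\infty$ error from below.

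\textbf{Main obstacle.} The delicate part is Step~3: without the sign-alignment trick, the per-user coordinates $\alpha^{(i)}_j-\beta^{(i)}_j$ can cancel across users and erase the gap in expected empirical frequencies. The resolution exploits the order-symmetry of Radon pairs, flipping each user individually at no cost to output-law equality so as to turn the unsigned pigeonhole bound into a signed one. The stronger hypothesis $n>12(2^\ell+1)^2$ is not needed by the expected-gap argument above but presumably enters the authors' proof when converting the expectation statement to a high-probability concentration bound on $\Mlt{z_{j^\star}}(X^n)/n$ about its mean for a specific $X^n$; that concentration step is plausibly what accounts for the factor-of-$4$ gap between the bound $1/(2^\ell+1)$ obtained here and the stated $1/(4(2^\ell+1))$.
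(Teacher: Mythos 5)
Your proposal is correct and follows essentially the same skeleton as the paper's proof: exploit the rank deficiency of each $W_i$ on a fixed set of $2^\ell+1$ input symbols to produce two per-user input distributions with disjoint support and identical image under $W_i^T$; pigeonhole over those $2^\ell+1$ coordinates to locate one where the aggregate $\ell_1$ mass of the differences is at least $2n/(2^\ell+1)$; and align signs per user so the per-user gaps accumulate rather than cancel. Your invocation of Radon's theorem is precisely the paper's construction of a kernel vector of the $2^\ell\times(2^\ell+1)$ stochastic submatrix, split into positive and negative parts and normalized to total variation $2$ (the column-stochasticity of $W_i^T$ forces the kernel vector's entries to sum to zero, which is what makes both parts probability vectors after scaling); and your explicit swap in Step~3 is exactly what the paper's terse ``without loss of generality'' is licensed by, since the Radon identity is symmetric under exchanging the pair.

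Where you genuinely deviate is the finishing step. The paper samples two random datasets $X^n\sim\prod_i\p_i$, $X'^n\sim\prod_i\p'_i$, invokes Chebyshev to show the counts at $x_0$ separate by $\Omega(n/(2^\ell+1))$ with probability at least $0.9$, and then extracts a fixed bad realization from that event; this concentration step is exactly what consumes the hypothesis $n>12(2^\ell+1)^2$ and loses a constant. You instead note that $\EE[\hat{\p}(Y^n)]$ is identical under the two priors (same output law), so the pair of triangle inequalities
\[
\EE_{\pi_\alpha}\bigl[|\hat{\p}(j^\star)-N_{j^\star}/n|\bigr]\ge\bigl|\EE[\hat{\p}(j^\star)]-\EE_{\pi_\alpha}[N_{j^\star}/n]\bigr|,
\qquad
\EE_{\pi_\beta}\bigl[|\hat{\p}(j^\star)-N_{j^\star}/n|\bigr]\ge\bigl|\EE[\hat{\p}(j^\star)]-\EE_{\pi_\beta}[N_{j^\star}/n]\bigr|
\]
sum to at least the mean gap $2/(2^\ell+1)$, and the probabilistic method over the prior's support hands you a fixed dataset. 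This is cleaner than the paper's argument: it is purely an expectation computation, dispenses with the sample-size hypothesis entirely, and gives the slightly sharper constant $1/(2^\ell+1)$ in place of $1/(2^{\ell+2}+4)$. The paper's Chebyshev route would only be needed if one wanted a high-probability (rather than expected-loss) statement, which the theorem as stated does not require.
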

\begin{proof}
We will use the probabilistic method to show the existence of such a dataset. To do so, we design a dataset generating process, and show that the expected $\ell_\infty$ loss over the process and randomness induced by the channels is large, which shows that the expected $\ell_\infty$ loss for the worst dataset is also large. 

Similar to Section~\ref{sec:lower_symmetric}, recall that each $W_i$ can be represented by a $k \times 2^\ell$ transition probability matrix (TPM) where for user $i$, $W_i(x,y) = \probof{Y_i = y| X_i = x}$. Consider distributions $\p_1, \ldots, \p_\ns$ over $[\ab]$, and suppose the data at user $i$, $X_i$ is generated from $\p_i$. Then $\q_i$,  the output distribution of $Y_i$ is given by $W_i^T \p_i$. We will restrict to distributions $\p_i$'s to have support over the first $2^\ell+1$ symbols. Namely, for all $2^\ell+1<x\le \ab$, $\p_i(x)=0$.  Similar to the proof of Theorem~\ref{thm:dist_sym}, since the output dimension is  $2^\ell$, for each $i$, there exists a non-zero vector $\mathbf{e}_i\in\RR^k$, such that $\mathbf{e}_i(x)=0$ for $2^\ell+1<x$, and $W_i^T \mathbf{e}_i = 0$. Further, recall that since $W_i$ is a TPM, $\sum_{x = 1}^{k} \mathbf{e}_i(x) = 0$. Therefore, upon normalizing, assume $\|e_i\|_1 = 2$. 
%	\begin{align}
%		W_i^T \mathbf{e}_i = 0, \sum_{x = 1}^{k} \mathbf{e}_i(x) = 0. \nonumber 
%	\end{align}
%	Without loss of generality, 
Let
	\begin{align}
		\p_i = \max \{\mathbf{e}_i, 0\}, \p'_i = \max \{-\mathbf{e}_i, 0\}. \nonumber 
	\end{align}
	Then $\p_i$ and $\p_i'$ are valid distributions over $[k]$ and effective support only $\{1, \ldots, 2^\ell+1\}$, and $\norm{\p_i - \p_i'}{1} = 2$. Similarly construct $\p_i, \p_i'$ for each $i=1, \ldots, \ns$. Then,
	\begin{align}
		2n= \sum_{x = 1}^{k} \sum_{i = 1}^n |\p_i(x) - \p'_i(x)| = \sum_{x = 1}^{2^\ell+1} \sum_{i = 1}^n |\p_i(x) - \p'_i(x)|, \nonumber 
	\end{align}
%	\begin{align}
%	2n=\sum_{i = 1}^{n} \norm{\p_i - \p_i'}{1} &= \sum_{x = 1}^{k} \sum_{i = 1}^n |\p_i(x) - \p'_i(x)|\nonumber\\
%	&= \sum_{x = 1}^{2^\ell+1} \sum_{i = 1}^n |\p_i(x) - \p'_i(x)|, \nonumber 
%\end{align}
	where we use that $\p_i$, and $\p_i'$ are supported only over the first $2^\ell+1$ symbols. Hence there exists $x_0 \in [2^\ell+1]$, such that
	\begin{equation} 
		\sum_{i = 1}^n |\p_i(x_0) - \p'_i(x_0)| \ge \frac{2n}{2^\ell +1 }. \nonumber 
	\end{equation}
	Without loss of generality, assume $\forall i, \p_i(x_0) \le \p'_i(x_0)$. Then the above equation becomes
	\begin{align} \label{eqn:difference}
		\sum_{i = 1}^n \p'_i(x_0) - \sum_{i = 1}^n \p_i(x_0) \ge \frac{2n}{2^\ell +1 }.
	\end{align}
	
Now consider two datasets generated as follows. $X^n$ satisfies $\forall i \in [n], X_i \sim \p_i$ and $X'^{n}$ satisfies $\forall i \in [n], X_i' \sim \p_i$. Moreover since
	\begin{align}
		W_i^T \p_i = \q_i = W_i^T \p_i', \nonumber 
	\end{align}
	the output distribution $Y^\ns$ is identical for $X'^{n}$, and $X^{n}$. 
	
	Let $N_{x_0}(X^\ns)$ and $N_{x_0}(X'^\ns)$ be the number of appearances of $x_0$ in $X^\ns$ and $X'^\ns$. Then by~\eqref{eqn:difference}, 
	\[
	\expectation{N_{x_0}(X^\ns)}-\expectation{N_{x_0}(X'^\ns)}> \frac{2n}{2^\ell +1}.
	\]
	 Moreover, since $N_{x_0}$ are sum of independent binary random variables, $\variance{N_{x_0}}\le n/4$.  Now suppose $\ell<\frac14\log\ns-1$, then $n/(2^\ell+1)>n^{3/4}$. Therefore, by  Chebychev's inequality, for large $n$, 
	 \[
	 \probof{N_{x_0}(X^\ns)-N_{x_0}(X'^\ns)>\frac{n}{2^\ell +1}}>0.9.
	 \]
	Since the two output distributions are indistinguishable, we have the error is at least $\frac{n}{2^{\ell +1}+2}$ for one of the cases if this event happens. Hence the expectated loss would be at least $0.9 \times \frac{n}{2^{\ell +1}+2} > \frac{n}{2^{\ell +2}+4}$.
\end{proof}
\newerest{Hence we can see when $\ell = O(1)$. We cannot learn the frequency reliably up to accuracy better than a constant. Moreover, when $\ell = o(\log n + \log (1/\eps))$, we get 
\[
	\frac{1}{2^{\ell+1} + 1} > \sqrt{\frac{\log k }{n\eps^2}},
\]
implying that optimal frequency estimation algorithms must require $\Omega(\log n + \log (1/\eps))$ bits of communication when there is no public randomness. Similar to Section~\ref{sec:lower_symmetric}, the result also extends to non-private settings.}
%\begin{theorem} \label{thm:heavy_accuracy}
%	For all response schemes $(\{f_i\}_{i=1}^n, \hat{\p})$ with only private randomness and constant bits of communication, there exist one possibly random dataset $\{X_i\}_{i=1}^{n}$ and a constant $c_0$, such that: 
%	\begin{equation}
%		\probof{ \exists x \in [k], |\hat{\p}(x) - \frac{M_x(X^n)}{n}| \ge c_0} \ge \frac{1}{2}
%	\end{equation}
%\end{theorem}

%\begin{theorem} \label{thm:heavy_accuracy}
%	For all response schemes $\{\cR_i\}_{i=1}^n$ with only private randomness and less than $\log k$ communication, there exist two possibly random datasets $\{X_i\}_{i=1}^{n}, \{X'_i\}_{i=1}^{n} \in [k]^n$ such that:
%	\begin{enumerate}
%		\item $\forall i \in [n]$, $\cR_i(X_i)$ and $\cR_i(X'_i)$ have the same distribution.
%		\item There exists $x \in [k]$, $|M_x(X_i) - M_x(X'_i)| \ge \frac{n}{k}$ with probability one. 
%	\end{enumerate}
%\end{theorem}

\section{Experiments}
We conduct empirical evaluations for the one-bit distribution learning algorithm without public randomness proposed in Section~\ref{sec:onebit}. We compare the proposed algorithm (onebit) with other algorithms including Randomized Response (RR)~\cite{Warner65}, RAPPOR~\cite{ErlingssonPK14}, Hadamard Response (HR) ~\cite{AcharyaSZ18a} and subset selection (subset)~\cite{YeB18}. To obtain samples, we generate synthetic data from various classes of distributions including uniform distribution, geometric distributions with parameter 0.8 and 0.98, Zipf distributions with parameter 1.0 and 0.5 and Two-step distribution. We conduct the experiments for $k = 1000$ and $\eps =1$. The results are shown in Figure~\ref{fig:k-1000}. %For $k = 10000$ and for $\p \sim Geo(0.8)$, the result is shown in Figure~\ref{fig:k-10000}. 
Each point is the average of 30 independent experiments.

From the figures, we can see the performance of our proposed scheme is comparable to the best among all schemes for various kinds of distributions. And the communication complexity is only one bit while the least among others is $\Omega(\log k)$ bits~\cite{AcharyaSZ18a}.

\begin{figure} [h]
	\centering
	\subfigure[	{{\small Uniform}} ]{\includegraphics[width=0.22\textwidth]{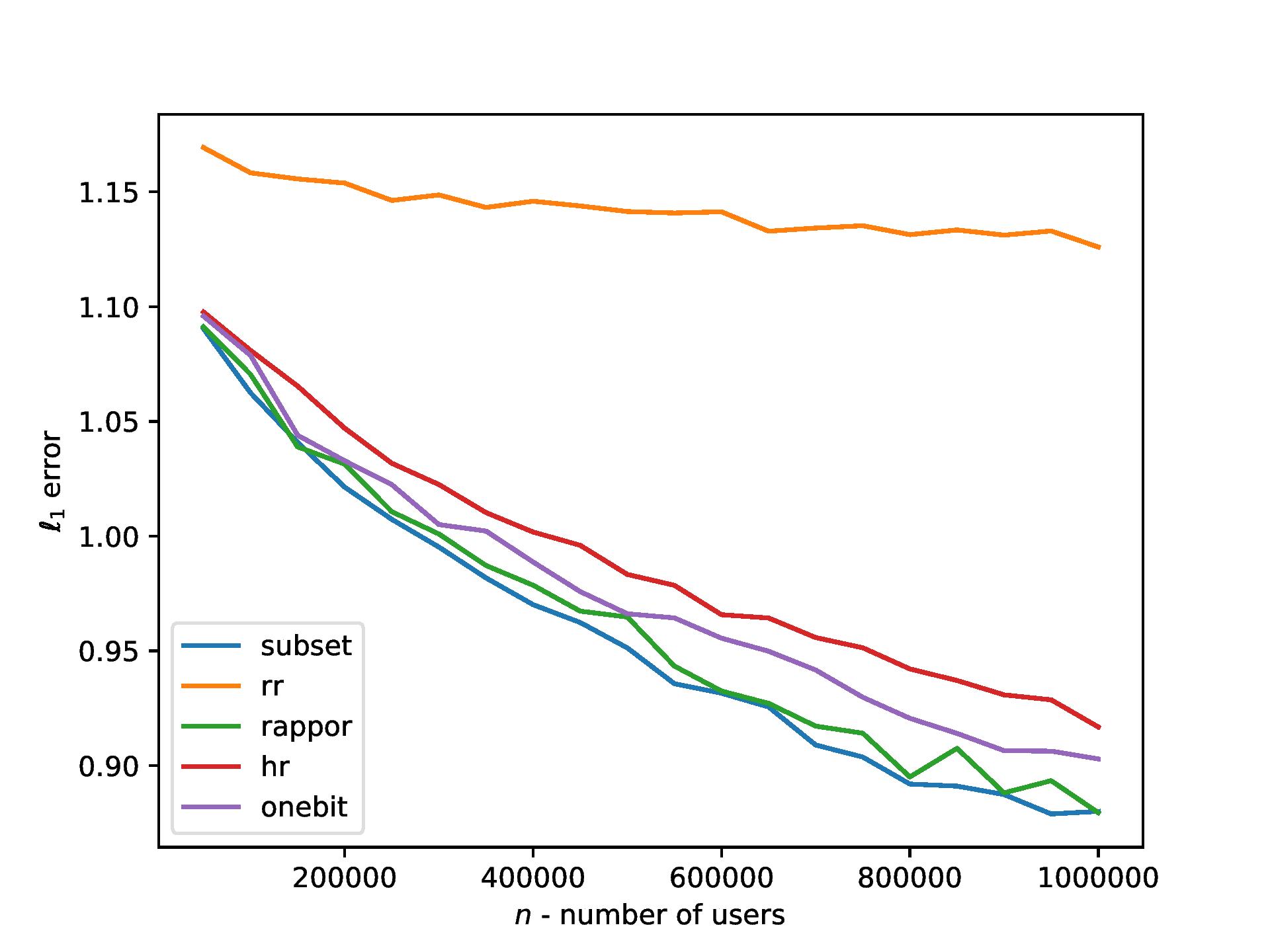}\label{fig:k_1000_eps_05}}
	\subfigure[	{{\small Geo$(0.8)$}} ]{\includegraphics[width=0.22\textwidth]{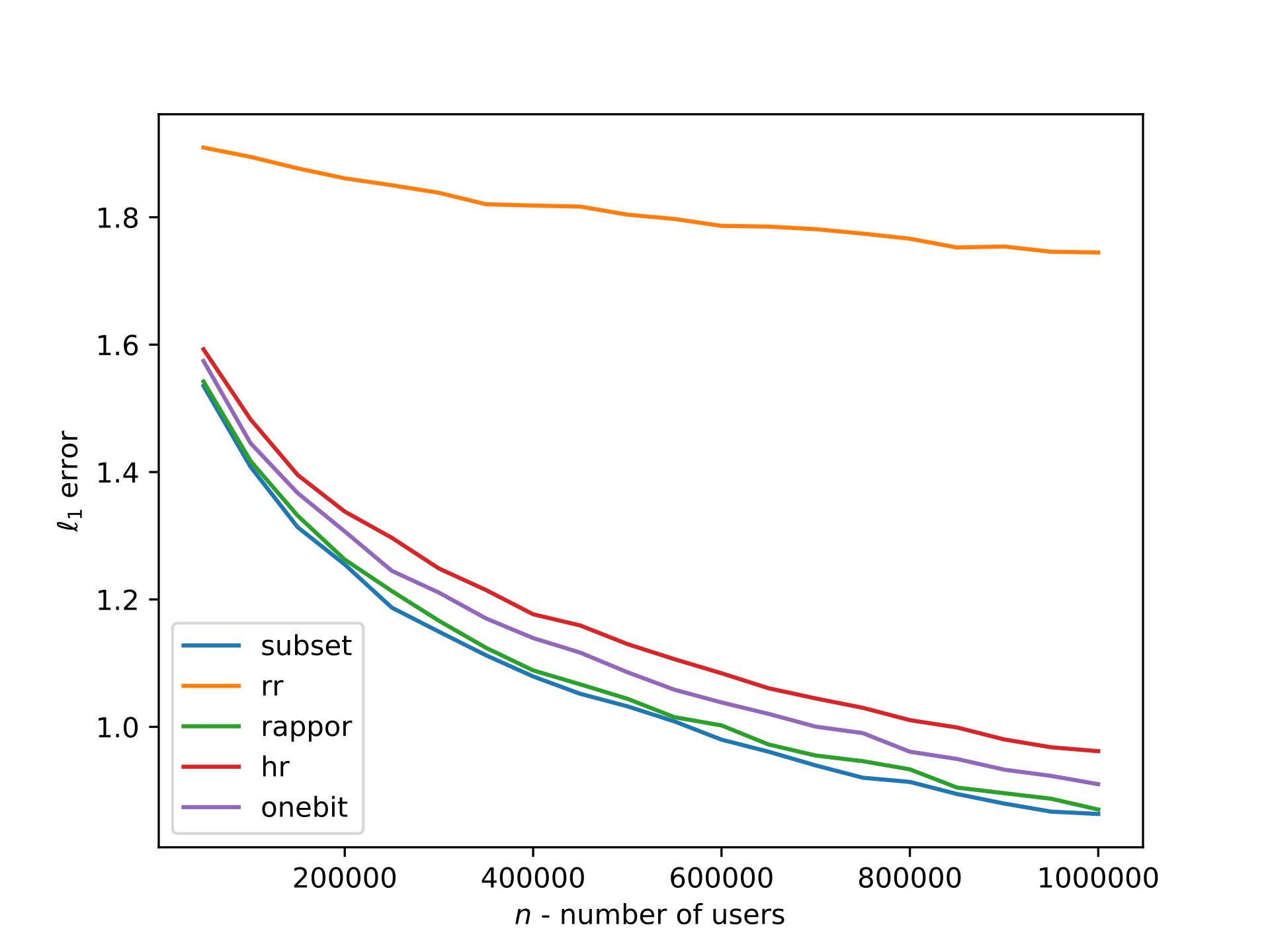}\label{fig:k_1000_eps_2}}
	\subfigure[	{{\small Geo$(0.98)$}} ]{\includegraphics[width=0.22\textwidth]{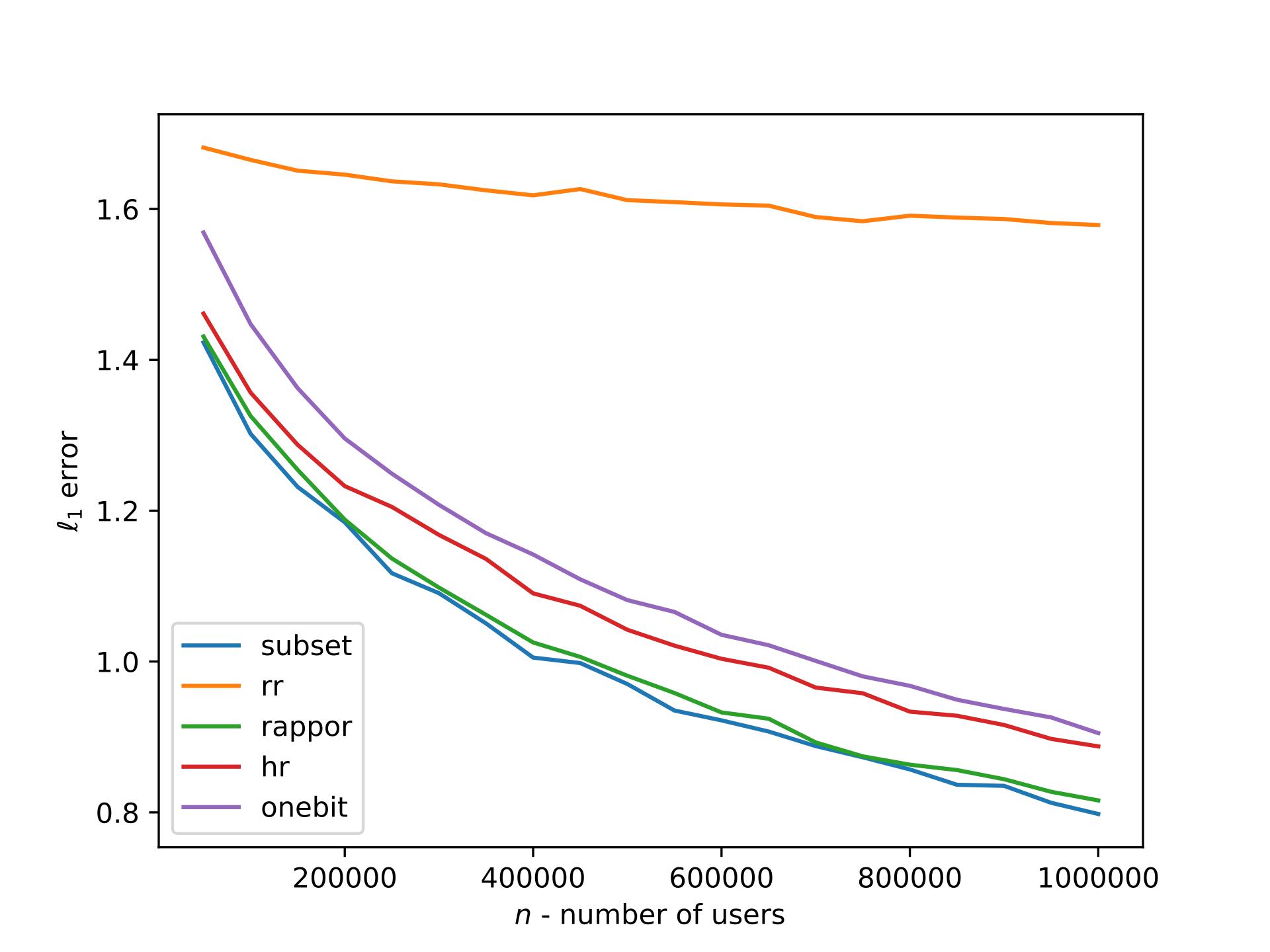}\label{fig:k_1000_eps_7}}	
	\subfigure[	{{\small Zipf$(0.5)$}} ]{\includegraphics[width=0.22\textwidth]{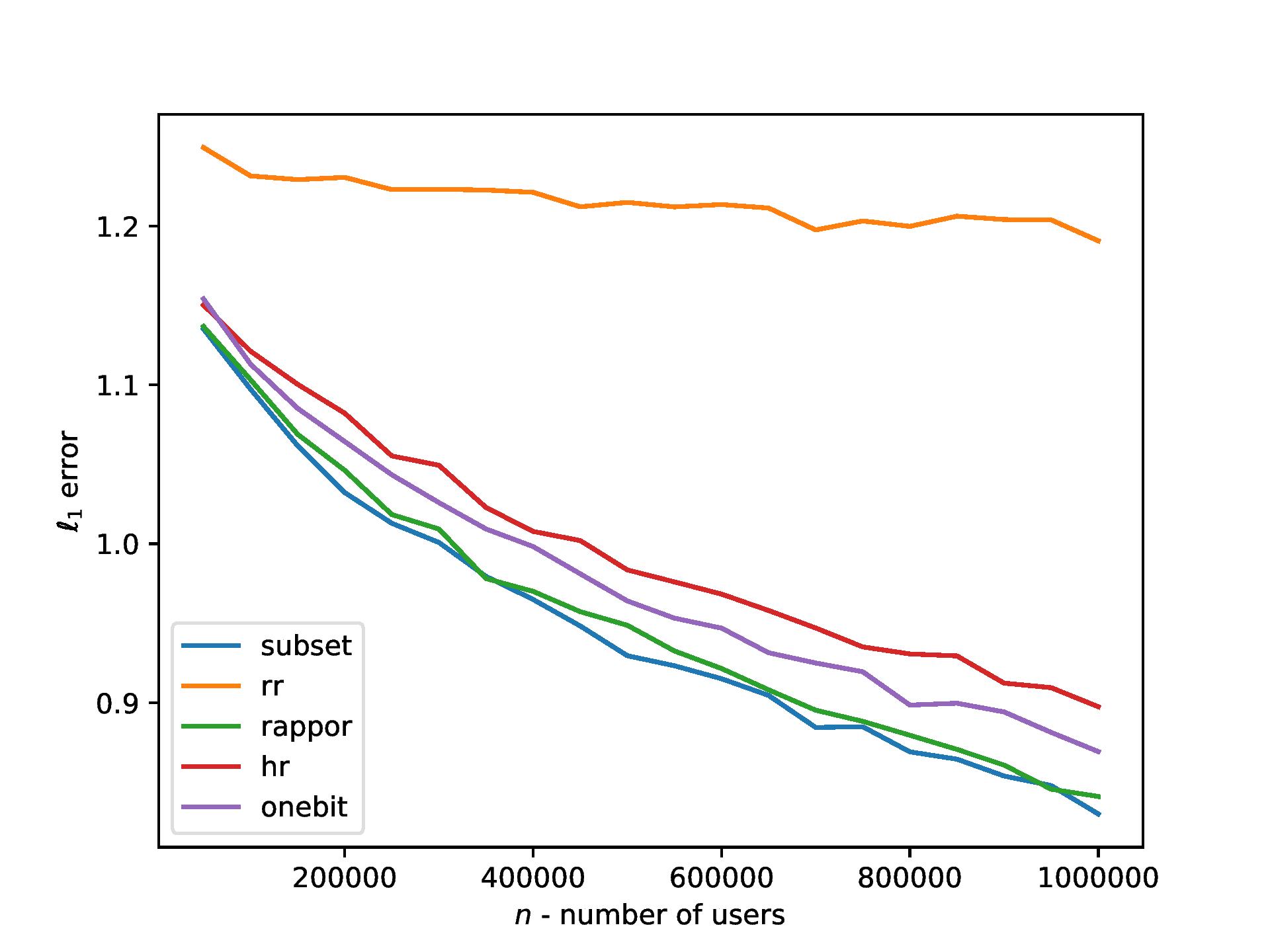}\label{fig:k_1000_eps_5}}
	\subfigure[	{{\small Zipf$(1.0)$}} ]{\includegraphics[width=0.22\textwidth]{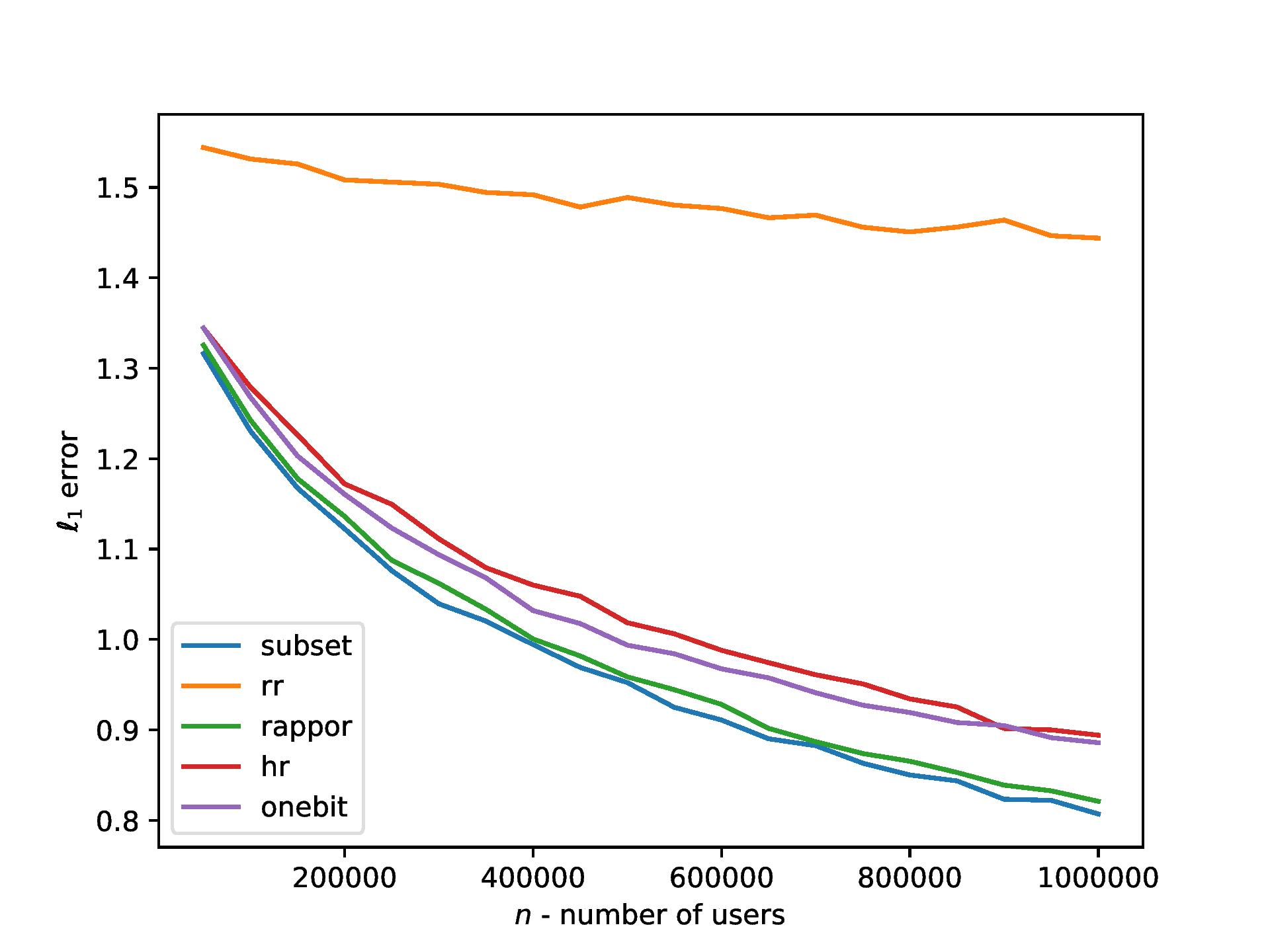}\label{fig:k_1000_eps_7}}
	\subfigure[	{{\small Two Steps}} ]{\includegraphics[width=0.22\textwidth]{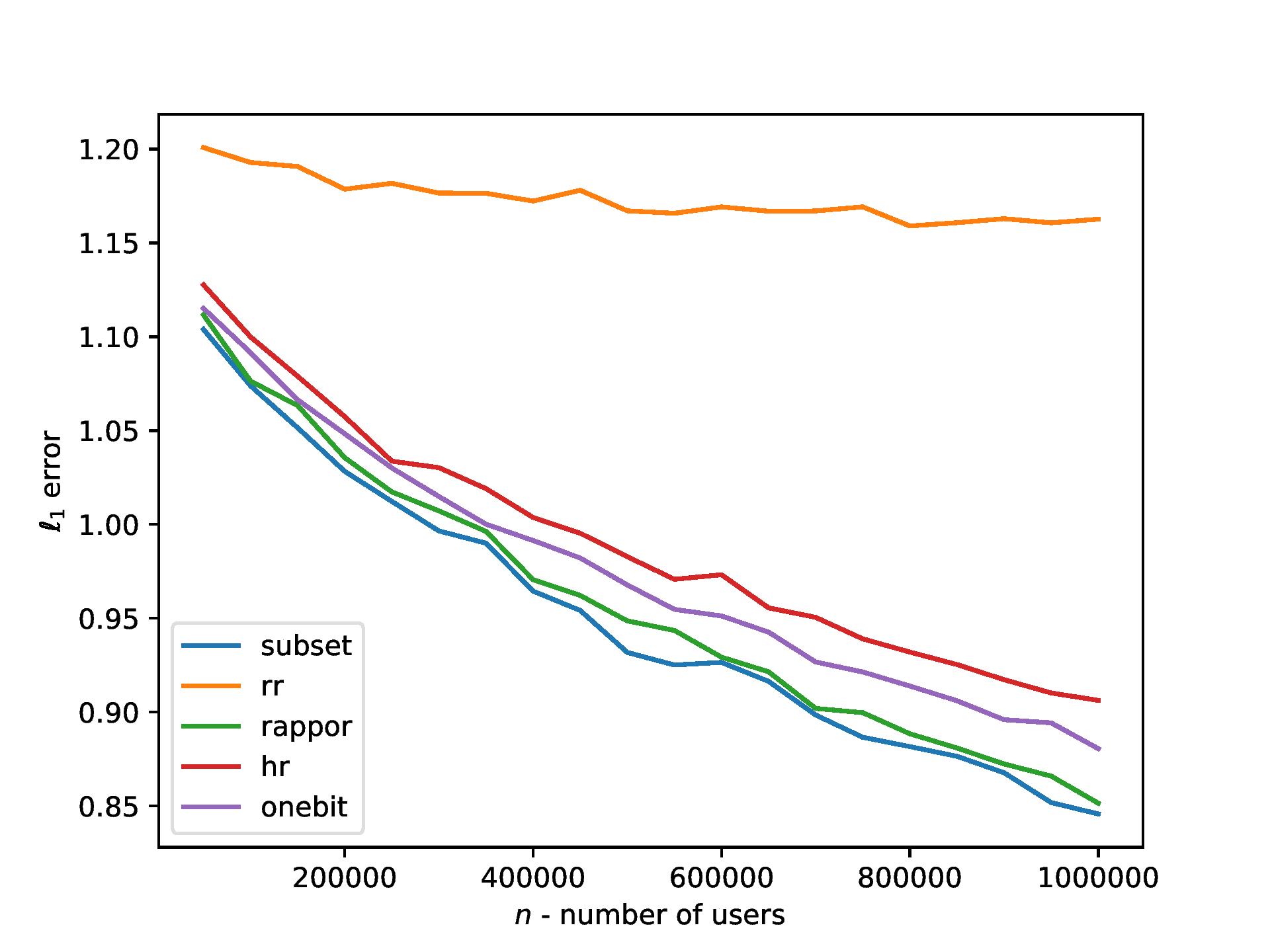}\label{fig:k_1000_eps_5}}
		\caption[ ]
	{\small $\ell_1$-error for $k = 1000$, $p$ from Uniform, Geo$(0.8)$, Geo$(0.98)$, Zipf$(0.5)$, Zipf$(1.0)$ and Two-step distributions.}   
	\label{fig:k-1000}
\end{figure}

%

%\newpage
\section*{Acknowledgements}
The authors thank the anonymous reviewers for pointing out the result in~\cite{BassilyS15} about one-bit public-coin schemes, and other valuable suggestions, and Cl\'ement Canonne and Sourbh Bhadane for providing feedback on the manuscript.
\bibliography{masterref}
\bibliographystyle{IEEEtran}

%%%%%%%%%%%%%%%%%%%%%%%%%%%%%%%%%%%%%%%%%%%%%%%%%%%%%%%%%%%%%%%%%%%%%%%%%%%%%%%
%%%%%%%%%%%%%%%%%%%%%%%%%%%%%%%%%%%%%%%%%%%%%%%%%%%%%%%%%%%%%%%%%%%%%%%%%%%%%%%
\end{document}